\theoremstyle{plain}
\tikzset{join/.code=\tikzset{after node path={%
\ifx\tikzchainprevious\pgfutil@empty\else(\tikzchainprevious)%
edge[every join]#1(\tikzchaincurrent)\fi}}}
\tikzset{>=stealth',every on chain/.append style={join},
         every join/.style={->}}
\tikzset{
    >=stealth',
    punkt/.style={
           rectangle,
           rounded corners,
           draw=black, very thick,
           text width=6.5em,
           minimum height=2em,
           text centered},
    pil/.style={
           ->,
           thick,
           shorten <=2pt,
           shorten >=2pt,}
}
\newcommand{\bee}{\begin{enumerate}}
\newcommand{\eee}{\end{enumerate}}
\newcommand{\benn}{\begin{equation*}}
\newcommand{\eenn}{\end{equation*}}
\newcommand{\be}{\begin{equation}}
\newcommand{\ee}{\end{equation}}
\newcommand{\bean}{\begin{eqnarray}}
\newcommand{\eean}{\end{eqnarray}}
\newcommand{\bea}{\begin{eqnarray*}}
\newcommand{\eea}{\end{eqnarray*}}
\newcommand{\p}{\partial}
\newcommand{\Ci}{C^{\infty}}
\newcommand{\N}{\mathbb{N}}
\newcommand{\Z}{\mathbb{Z}}
\newcommand{\R}{\mathbb{R}}
\newcommand{\C}{\mathbb{C}}
\newcommand{\op}[1]{\!\!\mathop{\rm ~#1}\nolimits}
\newcommand{\fm}{\frak{m}_m}
\newcommand{\cN}{{\cal N}}
\newcommand{\cF}{{\cal F}}
\mathchardef\za="710B  
\mathchardef\zb="710C  
\mathchardef\zg="710D  
\mathchardef\zd="710E  
\mathchardef\zve="710F 
\mathchardef\zz="7110  
\mathchardef\zh="7111  
\mathchardef\zy="7112 
\mathchardef\zi="7113  
\mathchardef\zk="7114  
\mathchardef\zl="7115  
\mathchardef\zm="7116  
\mathchardef\zn="7117  
\mathchardef\zx="7118  
\mathchardef\zp="7119  
\mathchardef\zr="711A  
\mathchardef\zs="711B  
\mathchardef\zt="711C  
\mathchardef\zu="711D  
\mathchardef\zf="711E 
\mathchardef\zq="711F  
\mathchardef\zc="7120  
\mathchardef\zw="7121  
\mathchardef\ze="7122  
\mathchardef\zvy="7123  
\mathchardef\zvw="7124  
\mathchardef\zvr="7125 
\mathchardef\zvs="7126 
\mathchardef\zvf="7127  
\mathchardef\zG="7000  
\mathchardef\zD="7001  
\mathchardef\zY="7002  
\mathchardef\zL="7003  
\mathchardef\zX="7004  
\mathchardef\zP="7005  
\mathchardef\zS="7006  
\mathchardef\zU="7007  
\mathchardef\zF="7008  
\mathchardef\zW="700A  
\newcommand{\cyclic}{\mathop{\kern0.9ex{{+}
\kern-2.15ex\raise-.25ex\hbox{\Large\hbox{$\circlearrowright$}}}}\limits}
 \newcommand{\cP}{{\cal P}}
 \newcommand{\cC}{{\cal C}}
 \newcommand{\cA}{{\cal A}}
 \newcommand{\cM}{{\cal M}}
 \newcommand{\cD}{{\cal D}}
 \newcommand{\cO}{{\cal O}}
 \newcommand{\cB}{{\cal B}}
\newtheorem{rem}{Remark}
\newtheorem{theo}[rem]{Theorem}
\newtheorem{prop}[rem]{Proposition}
\newtheorem{lem}[rem]{Lemma}
\newtheorem{cor}[rem]{Corollary}
\newtheorem{ex}[rem]{Example}
\newtheorem{defi}[rem]{Definition}
\newcommand{\0}{\otimes}
\newcommand{\id}{\op{id}}
\DeclareMathAlphabet{\mathpzc}{OT1}{pzc}{m}{it}
\newcommand{\Hom}{\mathrm{Hom}}
 \newcommand{\cJ}{\mathcal{J}}
\begin{document}
\title{\bf Functional analytic issues in $\Z_2 ^n$-Geometry}
\date{}
\author{Andrew Bruce and Norbert Poncin\footnote{University of Luxembourg, Mathematics Research Unit, 4364 Esch-sur-Alzette, Luxembourg, andrew.bruce@uni.lu, norbert.poncin@uni.lu}}
\maketitle

\begin{abstract} We show that the function sheaf of a $\Z_2^n$-manifold is a nuclear Fr\'echet sheaf of $\Z_2^n$-graded $\Z_2^n$-commutative associative unital algebras. Further, we prove that the components of the pullback sheaf morphism of a $\Z_2^n$-morphism are all continuous. These results are essential for the existence of categorical products in the category of $\Z_2^n$-manifolds. All proofs are self-contained and explicit.\end{abstract}

\tableofcontents

\section{Introduction}

$\Z_2^n$-Geometry is an emerging framework in mathematics and mathematical physics, which has been introduced in the foundational papers \cite{CGPa} and \cite{COP}. This non-trivial extension of standard supergeometry allows for $\Z_2^n$-gradings, where $$\Z_2^n=\Z_2^{\times n}=\Z_2\times\ldots\times\Z_2\quad\text{and}\quad n\in\N\;.$$ The corresponding $\Z_2^n$-commutation rule for coordinates $(u^A)_A$ with degrees $\deg u^A\in\Z_2^n$ does not use the product of the (obviously defined) parities, but the scalar product $\langle-,-\rangle$ of $\Z_2^n\,$: \be\label{ZCR}u^Au^B=(-1)^{\langle\op{deg}u^A,\op{deg}u^B\rangle}u^Bu^A\;.\ee The definitions of $\Z_2^n$-manifolds and $\Z_2^n$-morphisms are recalled in Section \ref{Summary}. A survey of $\Z_2^n$- Geometry is available in \cite{Pon}.\medskip

The motivations for this new setting originate in both, mathematics and physics.\medskip

In physics, $\Z_2^n$-gradings and the $\Z_2^n$-commutation rule ($n\ge 2$) are used in various contexts. References include \cite{AS}, \cite{AKTT}, \cite{Tol}, and \cite{YJ01}, as well as \cite{CKRS} and \cite{Khoo} (the latter, which can be traced back to \cite{BB}, \cite{DVH}, and \cite{MH}, use the $\Z_2^n$-formalism implicitly). Further, the $\Z_2^n$-commutation rule is not only necessary, but also sufficient: it can be shown \cite{CGP} that any commutation rule, for any finite number $m$ of coordinates, is of the form \eqref{ZCR}, for some $n \ge 2m$.\medskip

In mathematics, well-known algebras are $\Z_2^n$-commutative. This holds for instance for the algebra of quaternions (which is $\Z_2^n$-commutative with $n=3$) and, more generally, for any Clifford algebra $\op{Cl}_{p,q}(\R)$ ($\Z_2^n$-commutative with $n=p+q+1$), as well as for the algebra of Deligne differential forms on a standard supermanifold ($\Z_2^n$-commutative with $n=2$: the first $\Z_2$-degree is induced by the cohomological degree, the second $\Z_2$-degree is the parity).\medskip

Moreover, there exist canonical examples of $\Z^n_2$-manifolds. The local model of these higher or colored supermanifolds is necessarily \cite{Pon} the $\Z_2^n$-commutative algebra $\Ci(x)[[\xi]]$ of formal power series in the coordinates $\xi=(\xi_1,\ldots,\xi_N)$ of non-zero $\Z_2^n$-degree, with coefficients in the smooth functions in the coordinates $x=(x_1,\ldots,x_p)$ of zero $\Z_2^n$-degree. For instance, the tangent bundle of a classical $\Z_2$-manifold or supermanifold $\cM$ can be viewed as a $\Z_2^2$-manifold $T\cM$, in which case the function sheaf is the completion of Deligne differential forms of $\cM$. Actually, the tangent (and cotangent) bundle of any $\Z_2^n$-manifold is a $\Z_2^{n+1}$-manifold. Moreover, the `superization' of any double vector bundle (resp., any $n$-fold vector bundle) is canonically a $\Z_2^2$-manifold (resp., $\Z_2^n$-manifold).\medskip

We expect that a number of applications of $\Z_2^n$-Geometry in physics are based on the integration theory of $\Z_2^n$-manifolds. A first step towards $\Z_2^n$-integration is the $\Z_2^n$-generalization of the Berezinian. This fundamental concept has been constructed in \cite{COP} and is referred to as the $\Z_2^n$-Berezinian. The $\Z_2^n$-integration theory is still under investigation.\medskip

Other applications of $\Z_2^n$-Geometry rely on $\Z_2^n$ Lie groups (generalized super Lie groups) and their actions on $\Z_2^n$-manifolds (which are expected to be of importance in supergravity), on $\Z_2^n$ vector bundles (generalized super vector bundles) and their sections (these are basic objects needed for instance in the study of $\Z_2^n$ Lie algebroids), on the internal Hom in the category of $\Z_2^n$-manifolds (which is of importance in field theory -- $\Z^n$-gradings and $\Z_2^n$-parities), ... All these notions are themselves based on products in the category of $\Z_2^n$-manifolds.\medskip

The investigation in the present paper is motivated by three facts:

\begin{enumerate}

\item the proof of existence of the preceding categorical products uses the content of this paper,

\item the proof of the reconstructions of $\Z_2^n$-manifolds and morphisms between them from global $\Z_2^n$-functions and morphisms between them, is based on this content, and

\item the results of the present paper extend similar results in the standard supercase to the more challenging $\Z_2^n$-context, and constitute an added value of the less detailed study in the supercase, which is given in the Appendix of \cite{CCF}.

\end{enumerate}

\section{$\Z_2^n$-manifolds and their morphisms}\label{Summary}

We denote by $\Z_2^n$ the cartesian product of $n$ copies of $\Z_2\,$. Further, we use systematically the fol\-low\-ing \emph{standard order} of the $2^n$ elements of $\Z_2^n$: first the even degrees are ordered lexicographically, then the odd ones are also ordered lexicographically. For example, $$ \Z_2^3=\{(0,0,0), (0,1,1), (1,0,1), (1,1,0), (0,0,1),(0,1,0), (1,0,0), (1,1,1)\}\;.$$

A $\Z_2^n$-domain has, aside from the usual coordinates $x=(x^1,\ldots,x^p)$ of degree $\deg x^i=0\in\Z_2^n$, also formal coordinates or parameters $\xi=(\xi^1,\ldots,\xi^Q)$ of non-zero degrees $\deg\xi^a\in\Z_2^n$. These coordinates $u=(x,\xi)$ commute according to the generalized sign rule \be\label{SignRule}u^Au^B=(-1)^{\langle\deg u^A,\deg u^B\rangle}u^Bu^A\;,\ee where $\langle-,-\rangle$ denotes the standard scalar product. For instance, $$\langle (0,1,1),(1,0,1)\rangle=1\;.$$ Observe that, in contrast with ordinary $\Z_2$- or super-domains, even coordinates may anticommute, odd coordinates may commute, and even nonzero degree coordinates are not nilpotent. Of course, for $n=1$, we recover the classical situation. We denote by $p$ the number of coordinates $x^i$ of degree 0, by $q_1$ the number of coordinates $\xi^a$ which have the first non-zero degree of $\Z_2^n$, and so on. We get that way a tuple $\mathbf{q}=(q_1,\ldots,q_N)\in\N^N$ with $N:=2^{n}-1$. The dimension of the considered $\Z_2^n$-domain is then given by $p|\mathbf{q}$. Clearly the $Q$ above is the sum $|\mathbf{q}|=\sum_{i=1}^Nq_i$.
\medskip

We recall the definition of a $\Z_2^n$-manifold.

\begin{defi}
A \emph{locally $\Z_2^n$-ringed space} is a pair $(M,\cO_M)$ made of a topological space $M$ and a sheaf of $\Z_2^n$-graded $\Z_2^n$-commutative $(\,$in the sense of \eqref{SignRule}$\,)$ associative unital $\R$-algebras over it, such that at every point $m\in M$ the stalk $\cO_{M,m}$ is a local graded ring.\smallskip

A smooth \emph{$\Z_2^n$-manifold} of dimension $p|\mathbf{q}$ is a \emph{locally $\Z_2^n$-ringed space} $\cM=(M,\cO_M)$, which is locally isomorphic to the smooth $\Z_2^n$-domain $\R^{p|\mathbf{q}}:=(\R^p,\Ci_{\R^p}[[\xi]])$, and whose underlying topological space $M$ is second-countable and Hausdorff. Sections of the structure sheaf $\,\C^{\infty}_{\R^p}[[\xi]]$ are \emph{formal power series} in the $\Z_2^n$-commutative parameters $\xi$, with coefficients in smooth functions:
$$
 \Ci_{\R^p}(U)[[\xi]]:=\left\{ \sum_{\alpha\in\N^{\times |\mathbf{q}|}} f_{\alpha}(x)\,\xi^{\alpha}\; |\; f_{\alpha}\in\Ci(U)\right\}\quad(U\,\text{open in}\;\,\R^p) \;.
$$

\emph{$\Z_2^n$-morphisms} between $\Z_2^n$-manifolds are just morphisms of $\Z_2^n$-ringed spaces, i.e., pairs
$\zF=(\phi,\phi^*):(M,\cO_M)\to (N,\cO_N)$ made of a continuous map $\phi:M\to N$ and a sheaf morphism $\phi^*:\cO_N\to\phi_*\cO_M$, i.e., a family of $\Z_2^n$-graded unital $\R$-algebra morphisms, which commute with restrictions and are defined, for any open $V\subset N$, by $$\phi^*_V:\cO_N(V)\to \cO_M(\phi^{-1}(V))\;.$$

We denote the category of $\Z_2^n$-manifolds and $\Z_2^n$-morphisms between them by $\Z_2^n{\tt Man}$.
\end{defi}

\begin{rem}
When considering sheaves, like, e.g., $\cO_M$, we sometimes omit the underlying topological space $M$, provided this space is clear from the context.\end{rem}

\begin{rem} Let us stress that the base space $M$ corresponds to the degree zero coordinates (and not to the even degree coordinates), and let us mention that it can be proven that the topological base space $M$ carries a natural smooth manifold structure of dimension $p$, that the continuous base map $\phi:M\to N$ is in fact smooth, and that the algebra morphisms $$\phi^*_m:\cO_{\phi(m)}\to\cO_{m}\quad(m\in M)\;$$ between stalks, which are induced by the considered $\Z_2^n$-morphism $\Phi:{\cal M}\to {\cal N}$, respect the unique homogeneous maximal ideals of the local graded rings $\cO_{\phi(m)}$ and $\cO_{m}$.
\end{rem}

\section{Linear $\Z_2^n$-algebra}\label{LinAlg}

Let us mention that $\Z_2^n$-graded modules $V$ over a $\Z_2^n$-commutative algebra $A$ are defined canonically, $$A^{\zg'}\cdot V^{\zg''}\subset V^{\zg'+\zg''}\quad(\zg',\zg''\in\Z_2^n)\;,$$ and that $\Z_2^n$-graded modules over a fixed $\Z_2^n$-commutative algebra, and degree-preserving $A$-linear maps between them form an abelian category $\Z_2^n{\tt Mod}(A)$. This category admits a natural \emph{symmetric monoidal structure} $\0_A$, with braiding given by
$$
\begin{array}{rccc}
c_{\op{VW}}^{\op{gr}}: &V \otimes_A W &\to& W \otimes_A V\\
&v \otimes w &\mapsto& (-1)^{\langle \deg v,\deg w\rangle} w \otimes v\;,
\end{array}
$$
for homogeneous elements $v$ and $w$. This structure is also \emph{closed}, as for every $V\in\Z_2^n{\tt Mod}(A)$, the functor $$-\0_A V: \Z_2^n{\tt Mod}(A) \to \Z_2^n{\tt Mod}(A)$$ has a right-adjoint $$\underline{\op{Hom}}_A(V,-): \Z_2^n{\tt Mod}(A) \to \Z_2^n{\tt Mod}(A)\;,$$ i.e., for any $U,W\in\Z_2^n{\tt Mod}(A)$, there is a natural isomorphism
$$
\op{Hom}_A(U\0_A V, W)\simeq \op{Hom}_A(U,\underline{\op{Hom}}_A(V,W))\;,
$$
where $\Hom_A(V,W)$ denotes the {\it categorical hom} made of the degree-respecting $A$-linear maps, i.e., of the $A$-linear maps $\ell:V\to W$ of degree $0\in\Z_2^n$, i.e., $\ell(V^\zg)\subset W^{\zg+0},$ for all $\zg\in\Z_2^n$. One can readily verify that the \emph{internal hom} $\underline{\Hom}_A(V,W)$ is the $\Z_2^n$-graded $A$-module which consists of all $A$-linear maps $\ell: V \to W$ of all possible degrees $\za\in\Z_2^n$, i.e.,
$$
\ell(V^{\gamma})\subset W^{\gamma+\za}\;,
$$
for all $\gamma\in\Z_2^n$. The $A$-linear maps of degree $\za$ constitute the $\za$-part $\underline{\Hom}^{\za}_A(V,W)$ of $\underline{\Hom}_A(V,W)$. Hence, contrary to the case of modules over a classical commutative algebra, the internal hom $\underline{\Hom}_A$ differs from the categorical hom $\Hom_A$, since this latter contains only $0$-degree $A$-linear maps: $\Hom_A(V,W)=\underline{\Hom}^{0}_A(V,W)$.

\section{Sheaves of differential operators on a $\Z_2^n$-manifold}

It is known that bump functions and partitions of unity do exist in $\Z_2^n$-manifolds \cite{CGP}, and that they can be used similarly to classical bump functions in smooth geometry \cite{Lei}. Recall that in a $\Z_2^n$-manifold ${\cal M}=(M,\cO_M)$, the support of a $\Z_2^n$-function $s\in\cO_M(U)$, $U\subset M$, is defined as usual as the complementary in $U$ of the open subset of identical zeros of $s$ in $U$. Further, a bump function $\zg$ of $\cal M$ around $m\in M$ is a globally defined degree zero $\Z_2^n$-function $\zg\in\cO^0_M(M)$ for which there exist open neighborhoods $W\subset V\subset M$ of $m$, such that $\op{supp}\zg\subset V$ and the restriction $\zg|_W=1$. Moreover, if $\ze$ is the projection to the base that implements the short exact sequence of sheaves \cite{CGP} \be\label{FundaSES}0\to\ker_M\ze\to \cO_M\stackrel{\ze}{\to}\Ci_M\to 0\;,\ee the projection $\ze_M(\zg)\in\Ci_M(M)$ is a bump function of $M$ around $m$.\medskip

Consider now a $\Z_2^n$-manifold ${\cal M}=(M,\cO_M)$. Notice that $A=\R$ can be viewed as a $\Z_2^n$-commutative algebra concentrated in degree 0, and that, for any open $U\subset M$, the algebra $\cO(U)$ is an object in $\Z_2^n{\tt Mod}(\R)$, i.e., is a $\Z_2^n$-graded $\R$-vector-space. Hence, according to Section \ref{LinAlg}, $$\underline{\operatorname{End}}_\R(\cO(U)):=\underline{\Hom}_\R(\cO(U),\cO(U))\;$$ is the $\Z_2^n$-graded $\R$-vector-space of all $\R$-linear maps of all $\Z_2^n$-degrees from $\cO(U)$ to itself. Composition endows this space with a $\Z_2^n$-graded associative unital $\R$-algebra structure, and the $\Z_2^n$-graded commutator $[-,-]$ endows this space with a $\Z_2^n$-graded Lie algebra structure. We can identify $\cO(U)$ with an associative subalgebra of $\underline{\operatorname{End}}_\R(\cO(U))$ using the left-regular representation $$g\mapsto m_g\,,\;\, m_g(f)=g\cdot f\;.$$
\begin{defi}\label{AlgDefDO} The $\Z_2^n$-graded $\cO(U)$-module of {\it $k$-th order differential operators} $\cD^k(U)$, $k\in\N$, is defined inductively by
\be\label{Vin}\cD^k(U):=\{D\in \underline{\operatorname{End}}_\R(\cO(U)):[D,\cO(U)]\subset\cD^{k-1}(U)\},\ee where $\cD^{-1}(U)=\{0\}.$\end{defi}

Of course $\cD^0(U)=\cO(U)$. Indeed, it is clear that $\cO(U)\subset\cD^0(U)$. Conversely, if $D\in\cD^0(U)$, then, for any $f\in\cO(U)$, we have $[D,f]=0$, so that $$D(f)=D(f\cdot 1)=D(1)\cdot f=m_{D(1)}(f)\;,$$ i.e., $D\simeq D(1)\in\cO(U)$. It follows that 0-order differential operators are local, i.e., if $D\in\cD^0(U)$, $f\in\cO(U)$, and $-|_V$ denotes the restriction to an open subset $V\subset U$, then $D(f)|_V$ only depend on $f|_V$, or, equivalently, if $f|_V=0$ then $D(f)|_V=0$. This implies by induction that any differential operator of any order is local. Indeed, if $D\in\cD^k(U)$, if $f\in\cO(U)$ and $f|_V=0$, and if $m\in V$, let $\zg\in \cO^0(M)$ be a bump function of $\cal M$ around $m$ with support $\operatorname{supp}\zg\subset V$ and restriction $\zg|_W=1$, for some neighborhood $W\subset V$ of $m$. It then follows from the defining property of differential operators applied to $[D,\zg]f$, the induction assumption, and the fact $\zg f=0$, that $(Df)|_W=0$.\medskip

We now show that any $D\in\cD^k(U)$ can be localized.

\begin{prop}\label{Localization} Let $U\subset M$ and $V\subset U$ be open. Any $D\in\cD^k(U)$ induces a unique $D|_V\in\cD^k(V)$. This restricted operator satisfies $D|_V(g|_V)=(Dg)|_V$, for any $g\in\cO(U)$.\end{prop}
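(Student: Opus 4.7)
The goal is to construct $D|_V$ locally via bump functions, establish the consistency property $D|_V(g|_V)=(Dg)|_V$, deduce uniqueness, and verify membership in $\cD^k(V)$ by induction on $k$.

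For each $m\in V$, I choose a bump function $\zg_m\in\cO^0(M)$ with $\op{supp}\zg_m\subset V$ and $\zg_m|_{W_m}=1$ on some open neighborhood $W_m\subset V$ of $m$. Given $h\in\cO(V)$, the product $\zg_m h\in\cO(V)$ has support in $V$, hence glues with the zero section on $U\setminus\op{supp}\zg_m$ to a unique section $\widetilde{\zg_m h}\in\cO(U)$. I set
$$
D|_V(h)|_{W_m}:=D(\widetilde{\zg_m h})|_{W_m}.
$$
The definition is independent of the choice of $\zg_m$ and coherent on overlaps $W_m\cap W_{m'}$, since two such extensions differ by a section that vanishes on $W_m\cap W_{m'}$ and is therefore mapped to zero there by the locality of differential operators already established. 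The sheaf axiom yields a well-defined $D|_V(h)\in\cO(V)$, and $D|_V$ is plainly $\R$-linear of the same degree as $D$. Consistency $D|_V(g|_V)=(Dg)|_V$ follows because $\widetilde{\zg_m(g|_V)}=\zg_m g$ (by uniqueness of extension) and $(\zg_m g-g)|_{W_m}=0$ gives $D(\zg_m g)|_{W_m}=(Dg)|_{W_m}$. Uniqueness is then automatic: any $D'\in\cD^k(V)$ with $D'(g|_V)=(Dg)|_V$ must satisfy $D'(h)|_{W_m}=D'(\widetilde{\zg_m h}|_V)|_{W_m}=D(\widetilde{\zg_m h})|_{W_m}=D|_V(h)|_{W_m}$, using locality of $D'$ together with $\widetilde{\zg_m h}|_{W_m}=h|_{W_m}$.

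The substantive point, and the place where induction on $k$ is used, is to check $D|_V\in\cD^k(V)$. The base $k=0$ is immediate: $D$ acts by multiplication and the construction returns multiplication by $D|_V\in\cO(V)$. For the inductive step, fix homogeneous $h\in\cO(V)$ and $m\in V$, and set $H:=\widetilde{\zg_m h}\in\cO(U)$; then $[D,H]\in\cD^{k-1}(U)$ admits a restriction $[D,H]|_V\in\cD^{k-1}(V)$ by the inductive hypothesis. For $f\in\cO(V)$ and a shrinking $W\subset W_m$ equipped with a secondary bump function $\zg$ supported in $W_m$ with $\zg|_W=1$, write $F:=\widetilde{\zg f}\in\cO(U)$. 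Since $\zg_m\equiv 1$ on $\op{supp}\zg$, one checks $HF=\widetilde{\zg h f}$ in $\cO(U)$. Applying the $\Z_2^n$-graded identity
$$
D(HF)=[D,H](F)+(-1)^{\langle\deg D,\deg h\rangle}H\cdot D(F)
$$
and restricting to $W$, where $H|_W=h|_W$, the multiplicative terms cancel against those in $[D|_V,h](f)|_W$, yielding
$$
[D|_V,h](f)|_W=[D,H](F)|_W=[D,H]|_V(f)|_W,
$$
the last equality by the inductive consistency and the locality of $[D,H]|_V$ (using $(\zg f-f)|_W=0$). Covering $V$ by such $W$ and invoking the iterated-bracket definition of $\cD^{k-1}$ gives $[D|_V,h]\in\cD^{k-1}(V)$, hence $D|_V\in\cD^k(V)$.

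The main expected obstacle is the bookkeeping in the inductive step: one must carefully juggle two nested bump functions $\zg_m$ and $\zg$ so that the extensions $H$ and $F$ are multiplicatively compatible, invoke the graded Leibniz-type identity for $D(HF)$, and pair the result with the inductive consistency for $[D,H]|_V$. All other points reduce to the locality of differential operators and the sheaf axiom for $\cO$.
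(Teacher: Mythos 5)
Your proof is correct and follows essentially the same route as the paper: construct $D|_V$ locally via bump-function extensions, use locality of $D$ to glue the local definitions and obtain $D|_V(g|_V)=(Dg)|_V$, and verify $D|_V\in\cD^k(V)$ by trading commutators with functions on $V$ for commutators with their extensions on $U$. The only difference is organizational: where you run an induction on $k$ showing $[D|_V,h]$ agrees locally with $[D,H]|_V$, the paper collapses the same bookkeeping into the single identity $([\ldots[[D|_V,f_1],f_2],\ldots,f_{k+1}]g)|_W=([\ldots[[D,F_1],F_2],\ldots,F_{k+1}]G)|_W=0$ for the $(k+1)$-fold iterated bracket.
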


\begin{proof} If $f\in\cO(V)$ and $m\in V$, we choose a function $F\in\cO(U)$ such that $F|_W=f|_W$, for some neighborhood $W\subset V$ of $m$ (it suffices to choose a bump function $\zg$ in $\cM$ around $m$ with support in $V$ and restriction 1 to $W$ ($m\in W\subset V$) and to set $F=\zg f$). Locality of $D$ implies that the section $(DF)|_W\in\cO(W)$ and the section $(DF')|_{W'}\in\cO(W')$ obtained similarly but for a point $m'\in V$, depend only on $f$ and thus coincide on the intersection $W\cap W'$. Hence, these local sections define a unique global section $D|_Vf\in\cO(V)$ such that $(D|_Vf)|_W=(DF)|_W.$ In particular, if $f=g|_V$ is the restriction of a function $g\in\cO(U)$, we can take $F=g$, so that \be\label{DefRest}D|_V(g|_V)=(Dg)|_V\;,\ee as announced. Since, obviously, $D|_V\in\underline{\operatorname{End}}_\R(\cO(V))$ (note that $D|_V$ has the same parity as $D$), it suffices -- to prove Proposition \ref{Localization} -- to observe that, for any $f_1,\ldots,f_{k+1},g\in\cO(V)$, we have $$([\ldots[[D|_V,f_1],f_2],\ldots,f_{k+1}]g)|_W=([\ldots[[D,F_1],F_2],\ldots,F_{k+1}]G)|_W=0,$$ with obvious notation.\end{proof}

The assignment $\cD^k:U\to \cD^k(U)$ of the $\Z_2^n$-graded $\cO(U)$-module $\cD^k(U)$ to any open subset $U\subset M$ is a presheaf of $\Z_2^n$-graded $\cO$-modules. Indeed, the restriction maps $\zr^U_V:\cD^k(U)\ni D\mapsto D|_V\in\cD^k(V)$ are clearly of degree 0 and $\cO$-linear (i.e., $\zr^U_V(GD)=r^U_V(G)\zr^U_V(D)$, where $r^U_V$ is the restriction map of the function sheaf $\cO$), and they satisfy the usual compatibility conditions for restriction maps. The presheaf $\cD^k$ is in fact a sheaf over $M$ -- the standard proof goes through. If we wish to emphasize the base topological space of this sheaf we write $\cD^k_M$ instead of $\cD^k$.\medskip
\newcommand{\Zn}{\Z_2^n}

Some continuity results will be needed. Let $\cM=(M,\cO_M)$ be as above a $\Z_2^n$-manifold. The kernel sheaf $\ker_M\ze$ of the base projection sheaf morphism $\ze:\cO_M\to\Ci_M$ (see \eqref{FundaSES}) is a sheaf of ideals of the structure sheaf $\cO_M$, which we usually denote by $\cJ_M=\ker_M\ze$. Further, we write $\fm$ for the unique homogeneous maximal ideal of the stalk $\cO_m$ of the sheaf $\cO_M$ at $m\in M$. The topology of $\cO_M$ associated to the filtration $$\cO_M\supset \cJ_M\supset \cJ_M^2\supset\ldots$$ is referred to as the $\cJ_M$-adic topology of the structure sheaf. Similarly, the topology of $\cO_m$ associated to the filtration $$\cO_m\supset\fm\supset\fm^2\supset\ldots$$ is the $\fm$-adic topology of the stalk. The $\Z_2^n$-function sheaf $\cO_M$ (resp., the $\Z_2^n$-function algebra $\cO_M(U)$, $U$ open in $M$) is Hausdorff-complete with respect to the $\cJ_M$-adic (resp., $\cJ_M(U)$-adic) topology \cite{CGP}. Moreover, $\Z_2^n$-morphisms are $\cJ$-adically continuous and the induced morphisms between stalks are $\frak m$-adically continuous \cite{CGP}. Moreover:

\begin{prop}\label{ContDO}
Any differential operator $\zD\in\cD^k(U)$ of any degree $k\ge 0$ and over any open $U\subset M$ is $\cJ(U)$-adically continuous. In particular, if a sequence of functions $f_n\in\cO(U)$ tends $\cJ(U)$-adically to a function $f\in\cO(U)$, then $\zD f_n$ tends $\cJ(U)$-adically to $\zD f$. Due to its locality, any differential operator $\zD$ of order $k$ acting on $\cO(U)$ canonically induces, for any $m\in M$, a differential operator $\zD_m$ of order $k$ acting on $\cO_m$, and this induced operator $\zD_m$ is $\fm$-adically continuous.
\end{prop}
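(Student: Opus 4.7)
The plan is to reduce $\cJ(U)$-adic continuity of any $\zD\in\cD^k(U)$ to the single inclusion
\begin{equation*}
\zD\bigl(\cJ(U)^{m+k}\bigr)\subset \cJ(U)^m,\qquad \text{for every } m\ge 0.
\end{equation*}
Granting this, $\cJ(U)$-adic continuity follows at once: if $f_n\to f$ in the $\cJ(U)$-adic topology, then for each $m$ there exists $n_0$ with $f_n-f\in\cJ(U)^{m+k}$ for all $n\ge n_0$, hence $\zD f_n-\zD f=\zD(f_n-f)\in\cJ(U)^m$ by $\R$-linearity of $\zD$.

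I would prove the key inclusion by a double induction, outer on $k$ and inner on $m$. The base case $k=0$ reduces to multiplication by $\zD(1)\in\cO(U)$, which preserves every power of the ideal $\cJ(U)$. Assume the inclusion holds for all operators in $\cD^{k-1}(U)$ and for every $m$, and fix a homogeneous $\zD\in\cD^k(U)$ (the general case follows by decomposing $\zD$ into its $\Z_2^n$-homogeneous components). The inner base $m=0$ is vacuous. For $m\ge 1$, the identity $\cJ(U)^{m+k}=\cJ(U)\cdot\cJ(U)^{m+k-1}$ and $\R$-linearity of $\zD$ reduce the task to $f=ab$ with $a\in\cJ(U)$ homogeneous and $b\in\cJ(U)^{m+k-1}$. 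Unwinding the defining graded commutator of $\cD^k(U)$ yields the graded Leibniz identity
$$
\zD(ab)=[\zD,m_a](b)+(-1)^{\langle \deg\zD,\deg a\rangle}\,a\,\zD(b).
$$
Since $[\zD,m_a]\in\cD^{k-1}(U)$, the outer inductive hypothesis gives $[\zD,m_a](b)\in\cJ(U)^m$. Writing $b\in\cJ(U)^{m+k-1}=\cJ(U)^{(m-1)+k}$, the inner inductive hypothesis gives $\zD(b)\in\cJ(U)^{m-1}$, so $a\,\zD(b)\in\cJ(U)\cdot\cJ(U)^{m-1}=\cJ(U)^m$. Both summands lie in $\cJ(U)^m$, closing the induction.

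For the stalk statement, Proposition \ref{Localization} produces, for each open $V\ni m$, a localization $\zD|_V\in\cD^k(V)$, and compatibility under further restriction lets these operators assemble into a well-defined $\zD_m:\cO_m\to\cO_m$ via $\zD_m([f]_m)=[\zD|_V f]_m$. The iterated-commutator condition of Definition \ref{AlgDefDO} is preserved under passage to the colimit (restrictions and projections onto stalks are $\R$-algebra homomorphisms), so $\zD_m$ is again a differential operator of order $k$, now on the local graded ring $\cO_m$. Repeating the double induction verbatim with $\fm$ in the role of $\cJ(U)$ (using that $\fm$ is a two-sided ideal, since $\cO_m$ is $\Z_2^n$-commutative) yields $\zD_m(\fm^{n+k})\subset\fm^n$ for every $n\ge 0$, which is exactly $\fm$-adic continuity of $\zD_m$. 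The only point requiring genuine care is arranging the double induction so that both hypotheses are available at the Leibniz step; the filtration bookkeeping is insensitive to the $\Z_2^n$-grading signs, so no complications arise from the grading itself.
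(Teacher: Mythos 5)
Your proposal is correct and follows essentially the same route as the paper: your key inclusion $\zD(\cJ(U)^{m+k})\subset\cJ(U)^m$ is exactly Lemma \ref{DOandJadictop}, proved there by the same double induction (outer on $k$, inner on the exponent) via the graded commutator identity $[\zD,f]g=\zD(fg)\pm f\zD g$, and the stalk statement is handled identically via Lemma \ref{DOandmadictop}. The only cosmetic difference is that the paper deduces continuity topologically, by showing that the preimage of a basic adic open set $g+\cJ^c(U)$ is a union of sets $f+\cJ^{k+c}(U)$, whereas you phrase it through sequences; both follow immediately from the same inclusion.
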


The proof is based on the following lemmata. We will omit the subscript $M$ which remembers that most of the sheaves considered here are defined over $M$.

\begin{lem}\label{DOandJadictop} Let $\cM=(M,\cO_M)$ be a $\Z_2^n$-manifold, let $k\ge 0$ be an integer and $U\subset M$ be open, and let $\zD^k\in\cD^k(U)$. We have $$\zD^k\cJ^{k+c}(U)\subset \cJ^c(U)\;,$$ for all $c\ge 1$.
\end{lem}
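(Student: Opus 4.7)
The plan is to prove the inclusion by nested induction: an outer induction on the order $k$, and, inside the inductive step, an inner induction on the exponent $c\ge 1$. The base case $k=0$ is immediate since $\cD^0(U)=\cO(U)$ and $\cJ^c(U)$ is a two-sided graded ideal of $\cO(U)$, whence $\zD^0\,\cJ^{0+c}(U)=\cO(U)\cdot\cJ^c(U)\subset\cJ^c(U)$.

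For the outer inductive step, assume the statement for $k-1$ and every $c\ge 1$, and fix a homogeneous $\zD^k\in\cD^k(U)$. By $\R$-linearity and the graded nature of everything in sight, it suffices to check the inclusion on a single monomial $f_1f_2\cdots f_{k+c}\in\cJ^{k+c}(U)$ with homogeneous $f_i\in\cJ(U)$. The workhorse is the graded Leibniz identity
$$\zD^k(f_1\cdot g)=[\zD^k,m_{f_1}](g)+(-1)^{\langle\deg\zD^k,\deg f_1\rangle}\,f_1\,\zD^k(g),$$
where $g:=f_2\cdots f_{k+c}$, $m_{f_1}$ is left multiplication by $f_1$, and $[\,\cdot\,,\,\cdot\,]$ is the graded commutator in $\underline{\End}_\R(\cO(U))$. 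By Definition \ref{AlgDefDO}, the first factor $[\zD^k,m_{f_1}]$ belongs to $\cD^{k-1}(U)$; since $g\in\cJ^{k+c-1}(U)=\cJ^{(k-1)+c}(U)$, the outer induction hypothesis yields $[\zD^k,m_{f_1}](g)\in\cJ^c(U)$.

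It remains to show $f_1\,\zD^k(g)\in\cJ^c(U)$, which is where the inner induction on $c$ enters. When $c=1$, the claim is trivial: $f_1\in\cJ(U)$ and $\zD^k(g)\in\cO(U)$, so $f_1\,\zD^k(g)\in\cJ(U)\cdot\cO(U)\subset\cJ(U)=\cJ^1(U)$. When $c>1$, the inner induction hypothesis applied to $\zD^k$ on $g\in\cJ^{k+(c-1)}(U)$ gives $\zD^k(g)\in\cJ^{c-1}(U)$, hence $f_1\,\zD^k(g)\in\cJ(U)\cdot\cJ^{c-1}(U)\subset\cJ^c(U)$. The only subtlety is arranging the two interlocking inductions so that peeling off a single factor $f_1$ either lowers $k$ by one (commutator term, outer IH) or lowers $c$ by one (multiplicative term, inner IH); the $\Z_2^n$-signs produced by the graded commutator never leave the graded ideal $\cJ$ and cause no trouble.
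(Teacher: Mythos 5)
Your proof is correct and follows essentially the same route as the paper's: a nested induction (outer on the order $k$, inner on the exponent $c$) driven by the graded Leibniz identity $\zD(fg)=[\zD,f]g\pm f\,\zD g$, with the commutator term handled by the outer hypothesis and the multiplicative term by the inner one. The only cosmetic differences are that you reduce explicitly to monomials $f_1\cdots f_{k+c}$ where the paper works with a generic product $fg$ with $f\in\cJ(U)$ and $g\in\cJ^{k+c}(U)$, and that you index the step as $k-1\to k$ rather than $k\to k+1$.
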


\begin{proof}
In this proof we just write $\zD$ thus omitting the superscript indicating its order. If $k=0$, the differential operator $\zD$ is a function $\zD\in\cO(U)$. Hence, for any $c\ge 1$, $$\zD\, \cJ^{c}(U)=\zD\cdot \cJ^c(U)\subset \cJ^c(U)\;.$$ Assume now that the claim holds for $k\ge 0$. We will show that it is then also valid for $k+1$. Let $c=1$, let $\zD\in\cD^{k+1}(U)$, let $f\in \cJ(U)$ and let $g\in \cJ^{k+1}(U)$. We have $f\zD g\in \cJ(U)$, since $\cJ(U)$ is an ideal, and, since $[\zD,f]\in\cD^k(U)$, we have $[\zD,f]g=\zD(fg)\pm f\zD g\in \cJ(U),$ in view of the induction assumption. It follows that $\zD(fg)\in \cJ(U)$ and that $\zD\, \cJ^{k+2}(U)\subset \cJ(U)$: for differential operators of order $k+1$ the claim is true for the lowest value $c=1$. It thus suffices to assume that the statement holds for $c\ge 1$ and to prove that it holds then as well for $c+1$. Therefore, consider $f\in \cJ(U)$ and $g\in \cJ^{k+c+1}(U)$. Since $[\zD,f]g=\zD(fg)\pm f\zD g\in \cJ^{c+1}(U)$, due to the induction assumption on $k$, and since $f\zD g\in \cJ^{c+1}(U)$, due to the induction assumption on $c$, we finally obtain that $\zD\, \cJ^{k+c+2}(U)\subset \cJ^{c+1}(U)$, what completes the proof.
\end{proof}

The next lemma is similar and has the same proof.

\begin{lem}\label{DOandmadictop} Let $\cM=(M,\cO_M)$ be a $\Z_2^n$-manifold, let $k\ge 0$ be an integer and $U\subset M$ be open, and let $\zD^k\in\cD^k(U)$. If $m\in U$ and $\frak{m}_m$ is the unique homogeneous maximal ideal of $\cO_m$, we have
$$\zD_m^k\frak{m}^{k+c}_m\subset \frak{m}^c_m\;,$$ for all $c\ge 1$.
\end{lem}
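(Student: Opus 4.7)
The plan is to mirror the proof of Lemma~\ref{DOandJadictop} verbatim, replacing the global picture $(\cO(U),\cJ(U))$ by the stalk picture $(\cO_m,\mathfrak{m}_m)$. The one preliminary observation I need is that, as already recorded in Proposition~\ref{ContDO}, the localization of Proposition~\ref{Localization} assembles, by passing to the colimit over neighborhoods of $m$, into a canonically induced $\R$-linear operator $\zD_m$ on the stalk $\cO_m$ which is again a differential operator of order $k$ in the obvious stalk-level sense. Concretely, for any germ $[f]_m\in\cO_m$ represented by $f\in\cO(V)$, the bracket of $\zD_m$ with multiplication by $[f]_m$ is the operator induced by $[\zD|_V,f]\in\cD^{k-1}(V)$, hence is itself of order $k-1$ on the stalk.

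With this in hand I argue by induction on $k$. For $k=0$, the operator $\zD_m$ is just multiplication by an element of $\cO_m$, and $\zD_m\,\mathfrak{m}_m^c\subset\mathfrak{m}_m^c$ because $\mathfrak{m}_m^c$ is a two-sided ideal. For the inductive step from $k$ to $k+1$, I run a second induction on $c$. For $c=1$, take $f\in\mathfrak{m}_m$ and $g\in\mathfrak{m}_m^{k+1}$; the $\Z_2^n$-graded Leibniz-type identity
\[
\zD_m(fg)=[\zD_m,f]\,g\pm f\,\zD_m g
\]
combined with the fact that $[\zD_m,f]$ has order $k$ (so that $[\zD_m,f]\,g\in\mathfrak{m}_m$ by the outer inductive hypothesis applied with $c=1$) and with $f\,\zD_m g\in\mathfrak{m}_m$ (ideal property), yields $\zD_m(fg)\in\mathfrak{m}_m$. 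Since $\mathfrak{m}_m^{k+2}$ is generated as an additive group by such products $fg$, the base case $c=1$ is established. For the step from $c$ to $c+1$, the same identity with $g\in\mathfrak{m}_m^{k+c+1}$ gives $[\zD_m,f]\,g\in\mathfrak{m}_m^{c+1}$ by the outer induction on $k$ and $f\,\zD_m g\in\mathfrak{m}_m^{c+1}$ by the inner induction on $c$, so $\zD_m(fg)\in\mathfrak{m}_m^{c+1}$, which closes the double induction.

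The main delicate point is not the algebraic content, which is essentially identical to that of Lemma~\ref{DOandJadictop}, but rather the bookkeeping required to transfer the inductive hypothesis from operators on $\cO(U)$ to their induced counterparts on $\cO_m$. I will need to check that the stalk-level commutator $[\zD_m,[f]_m]$ genuinely coincides with the operator induced by $[\zD,f]$, so that the inductive hypothesis on $k$ is directly applicable to it, and that the maximal ideal $\mathfrak{m}_m$ is indeed two-sided (which follows from $\Z_2^n$-commutativity) so that factors can be absorbed on either side. Once these verifications are in place, the $\Z_2^n$-signs appearing in the Leibniz-type identity are irrelevant for ideal membership, and the argument closes exactly as in Lemma~\ref{DOandJadictop}.
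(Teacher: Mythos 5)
Your proposal is correct and follows exactly the route the paper intends: the paper's entire proof of this lemma is the remark that it ``is similar and has the same proof'' as Lemma~\ref{DOandJadictop}, i.e.\ the same double induction on $k$ and $c$ transported from $(\cO(U),\cJ(U))$ to $(\cO_m,\mathfrak{m}_m)$. Your additional care in checking that $\zD_m$ is well defined on the stalk and that $[\zD_m,[f]_m]$ is the operator induced by $[\zD,f]$ is a welcome explicit verification of what the paper leaves implicit.
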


We are now prepared to prove Proposition \ref{ContDO}.

\begin{proof} Let $\zD\in\cD^k(U)$ ($k\ge 0$, $U\subset M$ open), let $g \in \cO(U)$ and $c\ge 1$, and show that the preimage $\zD^{-1}(g+\cJ^c(U))$ of the arbitrary basic open subset $g+\cJ^c(U)$, is open in the $\cJ(U)$-adic topology. The fact that $\zD\, \cJ^{k+c}(U) \subset \cJ^c(U)$ implies that $\zD^{-1}(g + \cJ^c(U))$ is the union of the $\cJ(U)$-adically open sets $f + \cJ^{k+c}(U)$, where $f$ runs over all elements of $\zD^{-1}(g + \cJ^c(U))$, so that $\zD^{-1}(g + \cJ^c(U))$ is actually open. The claim that the limit of $\zD f_n$ is the value of $\zD$ at the limit of $f_n\,$, is a direct consequence of Lemma \ref{DOandJadictop}. The statement concerning $\zD_m$ follows analogously from Lemma \ref{DOandmadictop}.\end{proof}

In the next proposition, we consider $p|\mathbf{q}$ local coordinates $u=(x,\xi)$ and set $u=(x,\zy,\zz)$, where $x$ (resp., $\zy$, $\zz$) are the coordinates of degree zero (resp., of even non-zero degrees, of odd degrees). More precisely, as suggested above, we fix the coordinate order: \be\label{CoordOrder}u=(x^1,\ldots,x^p,\zy^1,\ldots,\zy^{q_1},\zy^{q_1+1},\ldots,\zy^{\sum_{i=1}^{2^{n-1}-1}q_i},\zz^1,\ldots,\zz^{q_{2^{n-1}}},\zz^{q_{2^{n-1}}+1},\ldots,\zz^{\sum_{i=2^{n-1}}^{2^n-1}q_i})\;.\ee The corresponding $\Z_2^n$-graded derivations $\p_\zz,\p_\zy,\p_x$ are written in decreasing order.

\begin{theo}\label{LocBasisDO} For any $k\in\N$, the sheaf $\cD_M^k$ of $k$-th order differential operators of a $\Z_2^n$-manifold ${\cal M}=(M,\cO_M)$ of dimension $p|\mathbf{q}$ is a locally free sheaf of $\Z_2^n$-graded $\cO_M$-modules, with local basis
$$\p^\zg_{\zz}\partial^{\zb}_{\zy}\p_x^\za\;,$$ where $(x,\zy,\zz)$ are local coordinates, $\zg^a\in\Z_2$ and $\zb^b,\za^c\in\N$, and $|\za|+|\zb|+|\zg|\le k.$
\end{theo}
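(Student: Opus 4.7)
I would work locally on a $\Z_2^n$-domain $U$ of dimension $p|\mathbf{q}$ with coordinates $(x,\theta,\zeta)$ ordered as in \eqref{CoordOrder}; by Proposition \ref{Localization} and the sheaf property it suffices to show that $\cD^k(U)$ is a free $\cO(U)$-module with the stated basis. The argument splits into three parts: the monomials $\p^\gamma_\zeta\p^\beta_\theta\p^\alpha_x$ (with $|\alpha|+|\beta|+|\gamma|\le k$) lie in $\cD^k(U)$; they are $\cO(U)$-linearly independent; and they $\cO(U)$-span $\cD^k(U)$.

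The first two parts are essentially formal. Each coordinate derivation $\p_{u^A}$ is a graded derivation, hence sits in $\cD^1(U)$, and the $\Z_2^n$-graded Leibniz identity
\[
[D_1D_2,f]=D_1[D_2,f]+(-1)^{\langle\deg D_2,\deg f\rangle}[D_1,f]D_2
\]
combined with Definition \ref{AlgDefDO} yields $\cD^a(U)\cdot\cD^b(U)\subset\cD^{a+b}(U)$ by induction, so each candidate basis element lies in $\cD^{|\alpha|+|\beta|+|\gamma|}(U)\subset\cD^k(U)$. For linear independence I would evaluate a putative relation $\sum F_{\alpha\beta\gamma}\p^\gamma_\zeta\p^\beta_\theta\p^\alpha_x=0$ on normalised coordinate monomials $\frac{1}{\alpha'!\beta'!}\,x^{\alpha'}\theta^{\beta'}\zeta^{\gamma'}$ at arbitrary $(m,0)$ and extract each $F_{\alpha\beta\gamma}$ by induction on the multi-index size, using that the derivations $\p^\gamma_\zeta\p^\beta_\theta\p^\alpha_x$ strictly lower the multi-index of a monomial.

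For the spanning step I would induct on $k$. The base $k=0$ is the identification $\cD^0(U)=\cO(U)$ observed after Definition \ref{AlgDefDO}. For the inductive step, Lemma \ref{DOandmadictop} gives $D\,\mathfrak{m}_m^{k+1}\subset\mathfrak{m}_m$ for every $m\in U_0$, so the value $(Df)(m,0)$ depends only on the $k$-jet of $f$ at $m$, i.e.\ on the data $(\p^\gamma_\zeta\p^\beta_\theta\p^\alpha_x f)(m,0)$ with $|\alpha|+|\beta|+|\gamma|\le k$. Reading off this finite-dimensional linear dependence at every $m$ yields smooth functions $B_{\alpha\beta\gamma}$ on $U_0$ (given, up to explicit signs and combinatorics, by $B_{\alpha\beta\gamma}(m)=\frac{1}{\alpha!\beta!}D((x-m)^\alpha\theta^\beta\zeta^\gamma)(m,0)$) whose combination reproduces the $\xi^0$-component of $D$. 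Higher $\xi$-coefficients of the decomposition are recovered by iterating the same jet-reading procedure at each subsequent $\xi$-order, using successive applications of Lemma \ref{DOandmadictop} (for $c\ge 1$) and the $\cJ$-adic continuity of Proposition \ref{ContDO} to pass to the $\cJ$-adic limit; this lifts each $B_{\alpha\beta\gamma}$ to an element $A_{\alpha\beta\gamma}\in\cO(U)$ and assembles the identity $D=\sum A_{\alpha\beta\gamma}\p^\gamma_\zeta\p^\beta_\theta\p^\alpha_x$ in $\cD^k(U)$.

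The main obstacle is precisely this coordinated reconstruction of all $\xi$-coefficients across the smooth base: the $x$-direction is handled by Taylor expansion with Lemma \ref{DOandmadictop} killing the remainder, while the $\xi$-direction requires the $\cJ$-adic continuity of Proposition \ref{ContDO} to justify exchanging $D$ with infinite $\xi$-sums. The $\Z_2^n$-sign bookkeeping when the derivations $\p^\gamma_\zeta\p^\beta_\theta\p^\alpha_x$ cross formal variables of mixed parity is a secondary nuisance that the coordinate ordering fixed in \eqref{CoordOrder} keeps systematic.
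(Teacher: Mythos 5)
Your overall architecture matches the paper's: extract candidate coefficients by applying $D$ to normalised monomials (triangularity gives uniqueness, hence linear independence), then show that the difference between $D$ and the resulting combination annihilates all of $\cO(U)$ by combining the $\mathfrak{m}_m$-adic estimate of Lemma \ref{DOandmadictop}, polynomial approximation, and the $\cJ$-adic continuity of Proposition \ref{ContDO}. Your first two parts are fine, and your pointwise jet argument does dispose of the $\xi^0$-level: since $\zD:=D-\zS$ kills polynomials of degree $\le k$ and $\zD_m\,\mathfrak{m}_m^{k+1}\subset\mathfrak{m}_m$, the value $\ze(\zD f)(m)$ vanishes for every $m$, i.e.\ $\zD f\in\cJ(U)$.

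The genuine gap is in promoting $\zD f\in\cJ(U)$ to $\zD f=0$. You must show $[\zD f]_m\in\mathfrak{m}_m^{c}$ for \emph{every} $c$, and the polynomial approximation theorem only gives $[f]_m-[\cP]_m\in\mathfrak{m}_m^{k+c}$ for a polynomial formal series $\cP$ whose degree grows with $c$ — in particular of degree strictly larger than $k$. Equivalently, in your iteration over $\xi$-orders, the $\xi^\delta$-coefficient of $\zD f$ is read off from $\ze(\p_\xi^\delta\zD f)$, and $\p_\xi^\delta\zD$ has order $k+|\delta|$, so its value at a point depends on the $(k+|\delta|)$-jet of $f$, not the $k$-jet. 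Either way you need to know that $\zD$, which by construction vanishes on polynomials of degree $\le k$, also vanishes on polynomials of arbitrary degree; this is not automatic and nothing in your sketch supplies it. The paper closes exactly this gap with the iterated-commutator identity \eqref{InducVanishPoly}: expanding $[\ldots[[\zD,f_1],f_2],\ldots,f_{\ell-1}]h$ for $\ell\ge k+1$ expresses $\zD(f_1\cdots f_{\ell-1}h)$ through values of $\zD$ on lower-degree products, whence $\zD$ kills all polynomials by induction on the degree; only then do $\cJ$-adic continuity (passage to polynomial formal series) and Lemma \ref{DOandmadictop} (passage to arbitrary $f$) finish the proof. A secondary remark: defining the coefficients at once as elements of $\cO(U)$ by $D^i_{\za\zb\zg}=Dm_{\za\zb\zg}-\sum_{j=0}^{i-1}D^jm_{\za\zb\zg}$, as the paper does, avoids your order-by-order lifting of the base functions $B_{\za\zb\zg}$ altogether.
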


\begin{rem} Note that the decomposition of a differential operator of order $k$ in this local basis leads to a finite sum $(\,$see Equation \eqref{LocForm} below$\,)$.\end{rem}

\begin{proof} Since we work locally, we take ${\cal M}=(U,\Ci_U[[\zy,\zz]])$, where $U$ is an open subset of $\R^p$. We first prove uniqueness of the coefficients. If $D\in\cD^k(U)$ {\it is} of the type
\be \sum_{i=0}^kD^i=\sum_{i=0}^k\sum_{|\za|+|\zb|+|\zg|=i}D^i_{\za\zb\zg}(x,\zy,\zz)\;\p_{\zz}^{\zg}\,\p_\zy^\zb\,\p_x^\za\;\in\cD^k(U),\label{LocForm}\ee and if $m_{\za\zb\zg}=\frac{1}{\za!\zb!}x^{\za}\zy^{\zb}\zz^\zg,$ where the coordinates are written in increasing order and $|\za|+|\zb|+|\zg|=i$, then necessarily \be D^i_{\za\zb\zg}=D^im_{\za\zb\zg}=Dm_{\za\zb\zg}-\sum_{j=0}^{i-1}D^jm_{\za\zb\zg},\label{CoeffLocForm}\;.\ee Indeed, the term with same indices as the considered monomial reduces to its coefficient when applied to the monomial, and any other term contains at least one index that is higher than the corresponding index in the monomial, so that this term annihilates the monomial). Hence, the coefficients $D^i_{\za\zb\zg}$ of $D$ are unique, if they exist. More precisely, for $|\zm|+|\zn|+|\zp|=1$ and $|\za|+|\zb|+|\zg|=2$, we get $$D^0_{000}=Dm_{000}\;,$$ $$D^1_{\zm\zn\zp}=Dm_{\zm\zn\zp}-Dm_{000}\cdot m_{\zm\zn\zp}\;,$$ \be\label{Explicit}D^2_{\za\zb\zg}=Dm_{\za\zb\zg}-\sum_{|\zm|+|\zn|+|\zp|=1}(Dm_{\zm\zn\zp}-Dm_{000}\cdot m_{\zm\zn\zp})\;\p_\zz^\zp\,\p_\zy^\zn\,\p_x^\zm \; m_{\za\zb\zg}-Dm_{000}\cdot m_{\za\zb\zg}\;\ldots\ee

Consider now an arbitrary $D\in\cD^k(U)$ and set $\zD=D-\zS\in\cD^k(U)$, where $\zS$ denotes the {\small RHS} of (\ref{LocForm}) with the coefficients defined in (\ref{CoeffLocForm}). This operator $\zD$ vanishes by construction
on the polynomials of degree $\le k$ in $x,\zy,\zz$. The reader might wish to check this claim by direct computation. For instance, the computation for $k=2$ is straightforward in view of Equation \eqref{Explicit}.\medskip

Note now that, for any $f_1,\ldots,f_{\ell-1},h\in\Ci_U(U)[[\zy,\zz]]$, $\ell\ge k+1$, we have
\be \zD(f_1\ldots f_{\ell-1}h)=\sum_{b=1}^{\ell-1}\sum\pm
f_{i_1}\ldots f_{i_b}\zD(f_{i_{b+1}}\ldots f_{i_{\ell-1}}h)+
F(h),\label{InducVanishPoly}\ee
as immediately seen when developing $F(h):=[\ldots[[\zD,f_1],f_2],\ldots,f_{\ell-1}]h.$ If $\ell>k+1$, the term $F(h)$ vanishes, whereas in the case $\ell=k+1$ it is given by $F(h)=F(1)h$. Equation (\ref{InducVanishPoly}) shows that $\zD=0$ on any polynomial of degree $k+1$ in $x,\zy,\zz$. Indeed, for coordinate functions $f_1,\ldots,f_{k},h$, the value $\zD(f_1\ldots f_{k}h)$ vanishes if and only if $F(1)$ vanishes. However, the value $F(1)$ is, again in view of (\ref{InducVanishPoly}), computed from values of $\zD$ on polynomials of degree $\le k$ and does therefore vanish. It now follows by induction from (\ref{InducVanishPoly}) that $\zD=0$ on an arbitrary polynomial in $x,\zy,\zz$. We can extend this conclusion to polynomial formal series, i.e., to series $$\mathcal{P} = \sum_{|\mu| \geq 0} P_\mu(x) \xi^\mu\;,$$ where $\mu$ is a multi-index with sum of components denoted by $|\zm|$, and where $P_\mu(x)$ is a polynomial in the zero-degree coordinates $x$. Indeed, the sequence $\mathcal{P}_c$ obtained by taking in the series $\mathcal{P}$ only the terms $|\mu|< c$ converges $\cJ(U)$-adically to the series $\cP$, since in local coordinates the ideal $\cJ^c(U)$ is made of the series $$\sum_{|\zm|\ge c}f_\zm(x)\xi^\zm\;,$$ all whose terms contain at least $c$ formal coordinates. Hence, Proposition \ref{ContDO} implies that $\zD\cP\in\cJ^c(U)$, for all $c$, i.e., that $\zD\cP=0$. Let now $f \in \Ci_U(U)[[\zy,\zz]]$ and let $m$ be any point in $U$. By the polynomial approximation theorem \cite[Theorem 6.10]{CGP}, there exists, for any $c$, a polynomial formal series $\mathcal{P}$ such that $[f]_m - [\mathcal{P}]_m \in \mathfrak{m}^{k+c}_m$ ($\cP$ depends on $f,m,$ and $c$). When applying the differential operator $\zD_m$ of order $k$ to $[f]_m - [\mathcal{P}]_m$, we get from Lemma \ref{DOandmadictop} that $[\zD f]_m\in\mathfrak{m}^{c}_m$, for any $c$, so that $[\zD f]_m = 0$. Since $m \in U$ was arbitrary, $\zD f = 0$ for any function $f\in\Ci_U(U)[[\zy,\zz]]$, i.e., $D = \zS$, where $\zS$ is, as above, the {\small RHS} of Equation \eqref{LocForm}.
\end{proof}

The definition of $\cD^1(U)$ implies straightforwardly that $$\cD^1(U)=\cO(U)\oplus{\op{Der}}_\R(\cO(U))\;,$$ where the second term of the {\small RHS} is the $\Z_2^n$-graded $\cO(U)$-module of vector fields over $U$. Further, the $\R$-vector space $\underline{\operatorname{End}}_\R(\cO(U))$ carries natural $\Z_2^n$-graded associative and $\Z_2^n$-graded Lie algebra structures $\circ$ and $[-,-]$, where $[-,-]$ is the above-introduced $\Z_2^n$-graded commutator. An induction on $k+\ell$ allows seeing that $\cD^k(U)\circ\cD^{\ell}(U)\subset\cD^{k+\ell}(U)$ and $[\cD^k(U),\cD^{\ell}(U)]\subset\cD^{k+\ell-1}(U)$, so that the (colimit or direct limit) vector space $\cD(U):=\cup_{k\in\N}\cD^k(U)$ of all differential operators inherits $\Z_2^n$-graded associative and $\Z_2^n$-graded Lie algebra structures that have weight $0$ and $-1$, respectively, with respect to the filtration degree. The assignment $\cD:U\to\cD(U)$ is a locally free sheaf of $\Z_2^n$-graded $\cO$-modules and of $\Z_2^n$-graded associative and Lie algebras over $M$. The algebra $\cD(M)$ is the $\Z_2^n$-graded Lie algebra of {\it differential operators} of the $\Z_2^n$-manifold $\cM$.

\begin{rem} The reader observed probably that (as usual) $\cD^{k-1}(U)\subset\cD^{k}(U)$. Hence, a $k$-th order differential operator is actually a differential operator of order $\le k$. To emphasize this fact some authors write $\cD^{\le k}(U)$ instead of $\cD^k(U)$.\end{rem}

\section{Functional analytic properties of the function sheaf of a $\Z_2^n$-manifold}

For a review of Fr\'echet spaces, algebras, and sheaves, we refer the reader to the Appendix.\medskip

We define $\Z_2^n$-graded Fr\'echet vector spaces, nuclear vector spaces, Fr\'echet algebras and Fr\'echet sheaves.

\begin{defi}\label{FNAS} \begin{itemize}\ \item A \emph{$\Z_2^n$-graded Fr\'echet vector space} is a $\Z_2^n$-graded vector space $V$, all whose homogeneous subspaces $V^\zg$, $\zg\in\Z_2^n$, are Fr\'echet vector spaces. We denote by $(p_i^\zg)_{i\in I}$ a family of seminorms corresponding to $V^\zg$. \item A \emph{$\Z_2^n$-graded nuclear {\small LCTVS}} is a $\Z_2^n$-graded vector space $V$, all whose homogeneous subspaces $V^\zg$, $\zg\in\Z_2^n$, are nuclear. \item A \emph{$\Z_2^n$-graded $\Zn$-commutative (nuclear) Fr\'echet algebra} is a $\Z_2^n$-graded (nuclear) Fr\'echet vector space $A$ that is equipped with a $\Z_2^n$-graded $\Zn$-commutative associative bilinear multiplication $\cdot:A^{\zg'}\times A^{\zg''}\to A^{\zg'+\zg''}$, such that there are equivalent countable families of seminorms that are submultiplicative, i.e., equivalent countable families $(q_n^\zg)_{n\in \N}$ of seminorms, such that, for all $n\in \N$, $$q_n^{\zg'+\zg''}(x\cdot y)\le q_n^{\zg'}(x)\,q_n^{\zg''}(y),\;\forall x\in A^{\zg'}, \forall y\in A^{\zg''}\;.$$ \item A \emph{(nuclear) Fr\'echet sheaf of $\Z_2^n$-graded $\Zn$-commutative algebras} is a sheaf $\cF$ of $\,\Z_2^n$-graded $\Zn$-commutative (real) algebras over a smooth manifold $M$, such that all section spaces $\cF(U)$ are $\Z_2^n$-graded $\Zn$-commutative (nuclear) Fr\'echet algebras and the locally convex topology on $\cF(U)$ is the coarsest topology for which all restriction maps $\cF(U)\to\cF(U_i)$ are continuous.\end{itemize}\end{defi}

The algebraic direct sum $\oplus_\za V_\za$ of a family $(V_\za)_{\za\in \cA}$ of {\small LCTVS}-s $V_\za$ is usually equipped with the {\it direct sum topology}, that is, with the finest locally convex topology such that the injections $i_\za:V_\za\to \oplus_\za V_\za$ are all continuous. In this case, we refer to the direct sum as the {\it topological direct sum of the {\small LCTVS}-s} $V_\za$. It is known that a countable topological direct sum of nuclear {\small LCTVS}-s is a nuclear {\small LCTVS}. Further, a finite topological direct sum of Fr\'echet spaces is a Fr\'echet space. These results show that a $\Z_2^n$-graded Fr\'echet vector space (resp., a $\Z_2^n$-graded nuclear {\small LCTVS}) is a Fr\'echet space (resp., a nuclear {\small LCTVS}), when equipped with the direct sum topology.

\begin{rem} In the following, we suppress the superscript $\zg$ in the various seminorms $p_i^\zg$, $q_n^\zg$, ... that we consider. In other words, $p_i$, $q_n$, ... refer to a seminorm of some space $V^\zg$.\end{rem}

We are now prepared to prove one of the main theorems of this paper.

\begin{theo}\label{MainTheo} The function sheaf $\cO_M$ of a $\Z_2^n$-manifold $\cM=(M,\cO_M)$ is a nuclear Fr\'echet sheaf of $\Z_2^n$-graded $\Zn$-commutative algebras.\end{theo}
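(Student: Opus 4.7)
The plan is to reduce the statement to the local model $\Ci(U)[[\xi]]$, $U\subset\R^p$ open, verify the three conditions there -- nuclear Fr\'echet vector-space structure, equivalent submultiplicative seminorms, and the coarsest-topology condition for the sheaf -- and then transport everything to an arbitrary open subset of $M$ via the sheaf axiom applied to a countable coordinate cover (available since $M$ is second-countable).

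First, on a chart, I would decompose $\cO_M(U)=\Ci(U)[[\xi]]$ by $\Z_2^n$-degree. Each homogeneous subspace $\cO_M^{\zg}(U)$ is naturally isomorphic, as a vector space, to the countable product $\prod_{\za\in A_\zg}\Ci(U)$, where $A_\zg$ is the set of admissible monomial multi-indices in $\xi$ of total $\Z_2^n$-degree $\zg$ (odd-parity entries in $\{0,1\}$, even non-zero parity entries in $\N$, so the index set is countable but in general infinite). Topologize $\Ci(U)$ by its standard nuclear Fr\'echet seminorm family $\|f\|_{j,m}=\sup_{x\in K_j,\,|\zb|\le m}|\pt^\zb f(x)|$, with $K_j\Subset U$ an exhausting sequence of compacts. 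Since a countable product of nuclear Fr\'echet spaces is again nuclear Fr\'echet, the countable family $q_{j,m,N}(f)=\sum_{\za\in A_\zg,\,|\za|\le N}\|f_\za\|_{j,m}$ turns $\cO_M^{\zg}(U)$ into a nuclear Fr\'echet space and generates the product topology; the finite direct sum $\cO_M(U)=\oplus_\zg\cO_M^\zg(U)$ then inherits these properties. To secure submultiplicativity I would combine the classical Leibniz estimate $\|fg\|_{j,m}\le c_m\|f\|_{j,m}\|g\|_{j,m}$ on $\Ci(U)$ with the graded Cauchy product $(fg)_\zg=\sum_{\za+\zb\equiv\zg}\pm f_\za g_\zb$, which is a finite sum for each $\zg$; this yields $q_{j,m,N}(fg)\le c_m\,q_{j,m,N}(f)\,q_{j,m,N}(g)$, so rescaling each $q_{j,m,N}$ by $c_m$ produces a countable, submultiplicative, equivalent family.

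For globalization, given an arbitrary open $U\subset M$, I would fix a countable cover $\{U_i\}$ of $U$ by coordinate $\Z_2^n$-domains. The sheaf axiom presents $\cO_M(U)$ as the equalizer of $\prod_i\cO_M(U_i)\rightrightarrows\prod_{i,j}\cO_M(U_i\cap U_j)$, hence as a closed $\Z_2^n$-graded subalgebra of the countable product $\prod_i\cO_M(U_i)$. Since a countable product of nuclear Fr\'echet spaces is nuclear Fr\'echet and closed graded subspaces inherit both properties, $\cO_M(U)$ is nuclear Fr\'echet in each homogeneous component; the submultiplicative seminorms on $\cO_M(U)$ are obtained by pulling back those on the product. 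By construction this topology is the coarsest one for which every restriction $\cO_M(U)\to\cO_M(U_i)$ is continuous, which is exactly the Fr\'echet-sheaf condition of Definition \ref{FNAS}.

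The main obstacle I anticipate is the bookkeeping of equivalences at the local level: making sure that the natural product topology on $\prod_{\za\in A_\zg}\Ci(U)$, the topology generated by the $q_{j,m,N}$, and its rescaled submultiplicative variant all coincide, and that this common topology is compatible with restriction to subcharts, so that it glues consistently into the sheaf topology required by Definition \ref{FNAS} rather than producing a strictly finer one.
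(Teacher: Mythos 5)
Your local analysis is essentially the paper's: both identify $\Ci(U)[[\xi]]$ on a chart with a countable product $\prod_{\za}\Ci(U)$, invoke nuclearity and the Fr\'echet property of countable products, and repair submultiplicativity of the Leibniz-type seminorms by the same $2^{m}$-rescaling trick (the paper's family $\zr_{C_n,m,\zm}$ in Equation \eqref{Cnmmu} is exactly your rescaled $q_{j,m,N}$). Where you genuinely diverge is in the globalization and, more importantly, in what you take as the \emph{definition} of the topology on $\cO(U)$. The paper defines an intrinsic topology on every $\cO(U)$ at once, via the seminorms $p_{C,D}(f)=\sup_C|\ze(D f)|$ indexed by compacta and $\Z_2^n$-differential operators; it must then work to show these are separating (Lemma \ref{LEM1}, which uses the local basis theorem for $\cD^k$ and $\cJ$-adic continuity), that on charts they reproduce the product topology, and that completeness for general $U$ follows by gluing Cauchy limits (Lemma \ref{LEM3}). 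You instead \emph{define} the topology on $\cO(U)$ as the initial topology from a chosen countable chart cover, realizing $\cO(U)$ as the equalizer of $\prod_i\cO(U_i)\rightrightarrows\prod_{i,j}\cO(U_i\cap U_j)$, i.e.\ a closed subspace of a countable nuclear Fr\'echet product. That is a cleaner route to completeness and nuclearity and avoids the entire differential-operator machinery, but it buys you those savings at the cost of the obstacle you yourself flag, which is a real gap rather than mere bookkeeping: you must prove (i) that the resulting topology is independent of the chosen chart cover, and (ii) that it is the coarsest topology making the restrictions continuous for \emph{every} open cover of $U$, not just the one you used to build it. Neither is automatic from the equalizer presentation. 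The paper closes exactly this gap by having the cover-free intrinsic seminorms $p_{C,D}$ available, and then deduces the coarsest-topology property for arbitrary covers from the open mapping theorem for Fr\'echet spaces together with second countability of $M$ (following Mallios); nuclearity for arbitrary $U$ is then inherited from the chart domains because nuclearity of a Fr\'echet sheaf can be tested on a basis. If you want to stay within your framework, you should either supply a direct comparison of the initial topologies attached to two covers via a common refinement (using continuity of restrictions between chart domains, which your coefficientwise description does give you), or introduce a cover-independent generating family of seminorms, which is precisely what the paper's $p_{C,D}$ accomplish.
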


\begin{proof} Let $U\in{\tt Open}(M)$, let $C$ be any compact subset of $U$, and let $D$ be any differential operator in $\cD(U)$. For any $f\in\cO(U)$, we set \be\label{StandSemiNormTop}p_{C,D}(f)=\sup_{x\in C}|\ze(D(f))(x)|\;,\ee where $\ze$ is the projection $\ze:\cO_M\to \Ci_M$, see Equation \eqref{FundaSES}. It is obvious that each $p_{C,D}$ is a seminorm on $\cO(U)$.

\begin{lem}\label{LEM1} For any $\,U\in{\tt Open}(M)$, the family of seminorms $(p_{C,D})_{C,D}$ on $\cO(U)$ is separating and endows $\cO(U)$ with a Hausdorff locally convex topological vector space structure. \end{lem}

\begin{proof} It suffices to prove the separability. If $\sup_{x\in C}|\ze(D(f))(x)|=0$, for all $C$ and all $D$, then $\ze(D(f))=0$ in $U$ for any $D$, since $U$ admits a (countable) cover by compact subsets $C$, see Lemma \ref{CompCov}. Differently stated, we have \be\label{U}D(f)\in \ker\ze_U=\cJ(U)\;,\ee for any $D\in\cD^k(U)$ and for any $k\in\N$. Let now $(V_i)_{i\in\N}$ be a (countable) cover of $U$ by $\Z_2^n$-chart domains (any open cover of $U$ admits a countable subcover, see proof of Lemma \ref{CompCov}) and let $V$ be any element of this cover. For any $m\in V$, there is a $\Z_2^n$-bump-function $\zg\in\cO^0(U)$ and neighborhoods $N_1\subset N_2\subset V$ of $m$ such that $\zg|_{N_1}=1$ and $\op{supp}\zg\subset N_2\,$. In view of Equation (\ref{U}), we have \be\label{N}D|_{N_1}(f|_{N_1})=D(f)|_{N_1}\in\cJ(N_1)\;,\ee for any $D\in\cD^k(U)$ and any $k\in\N$. If we choose $D=1\in\cO(U)=\cD^0(U)$ in (\ref{N}), we can conclude that $f|_{N_1}\in\cJ(N_1)$. It turns out that, if $f|_{N_1}\in \cJ^{k-1}(N_1)$, $k\ge 2$, then $f|_{N_1}\in\cJ^k(N_1)\,$. Indeed, denote the coordinates in $V$ by $u=(x,\xi)$ and write $$f|_{N_1}(x,\xi)=\sum_{\ell=k-1}^\infty\sum_{|\zb|=\ell}f_{\zb}(x)\,\xi^\zb\;.$$ Our goal is to show that $f|_{N_1}\in \cJ^k(N_1)$, i.e., that all coefficients $f_\zb(x)$, $|\zb|=k-1$, vanish in $N_1$. Let ${\frak B}$ be any multiindex such that $|{\frak B}|=k-1$. When using the operator $\zg\,\p_\xi^{\frak B}\in\cD^{k-1}(U)$, we get from (\ref{N}), $$\cJ(N_1)\ni\p_\xi^{\frak B}\sum_{\ell=k-1}^\infty\sum_{|\zb|=\ell}f_{\zb}(x)\,\xi^\zb=\sum_{\ell=k-1}^\infty\sum_{|\zb|=\ell}f_{\zb}(x)\,\p_\xi^{\frak B}\,\xi^\zb\;,$$ as $\p_\xi^{\frak B}$ is $\cJ(N_1)$-adically continuous. Since $\p_\xi^{\frak B}\xi^{\frak B}={\frak B}!\,$, it follows that $f_{\frak B}$ vanishes in $N_1$. Hence, $f|_{N_1}\in\cJ^k(N_1)$, so $f|_{N_1}\in\cJ^r(N_1)$ for all $r$, and thus $f|_{N_1}=0$. Finally, we get $f|_V=0$ and $f=0\,$. \end{proof}

The next lemma covers the case where $U$ is a $\Z_2^n$-chart domain.

\begin{lem}\label{LEM2} Let $U\subset M$ be a $\Z_2^n$-chart domain with coordinates $u=(x,\xi)$.
\begin{itemize} \item Let $$f_n=\sum_\zb f_{n\zb}(x)\xi^\zb\quad(\text{resp.,}\;f=\sum_\zb f_\zb(x)\xi^\zb)$$ be a sequence of functions (resp., a function) in $\cO(U)$. The sequence $f_n$ is Cauchy in $\cO(U)$ $(\,$resp., converges to $f$ in $\cO(U)$$\,)$ if and only if the sequences $f_{n\zb}$ are all Cauchy in $\Ci(U)$ $(\,$resp., converge all to the corresponding $f_\zb$ in $\Ci(U)$$\,)$. \item The locally convex Hausdorff space $\cO(U)$ is complete. \item The space $\cO(U)$ is a $\Z_2^n$-graded nuclear Fr\'echet algebra.
\end{itemize}
\end{lem}

We start with the following observation. Let $(U,u=(x,\xi))$ be a $\Z_2^n$-coordinate system, $D\in\cD^k(U)$, and $f\in\cO(U)$. We have $$\ze(D(f))=\ze\sum_{\ell=0}^k\sum_{|\za|+|\zb|=\ell} D_{\za\zb}(x,\xi)\, \p_\xi^\zb \p_x^\za\,\sum_{\zg}f_\zg(x)\xi^\zg=$$ $$\sum_{\ell=0}^k\sum_{|\za|+|\zb|=\ell}\,\ze(D_{\za\zb}(x,\xi))\;\ze\sum_{\zg}\p_x^\za f_\zg(x)\,\p_\xi^\zb\xi^\zg\;.$$ If $\zb\neq\zg$, either there is $\zb_i>\zg_i\,$, or all $\zb_i\le\zg_i$ but for at least one $i$ we have $\zb_i<\zg_i$. In the first case $\p_\xi^\zb\xi^\zg=0$ and in the second $\p_\xi^\zb\xi^\zg\in\cJ(U)$, so that in both situations the corresponding term in the series over $\zg$ vanishes under the action of $\ze$. If $\zb=\zg$, the derivative with respect to $\xi$ equals $\zb!\,$, so that \be\label{EpsDiff}\ze(D(f))=\sum_{\ell=0}^k\sum_{|\za|+|\zb|=\ell}\,\ze(D_{\za\zb}(x,\xi))\;\zb!\,\p_x^\za f_\zb(x)\;.\ee

We are now prepared for the proof of Lemma \ref{LEM2}.

\begin{proof} $\bullet\,$ Assume first that the $f_{n\zb}$ are all Cauchy in the locally convex topology of $\Ci(U)$: for any base differential operator $\zD$ (acting on base functions $\Ci(U))$ and any compact $C\subset U$, we have \be\label{CyCi}p_{\zD,C}(f_{r\zb}-f_{s\zb})=\sup_C|\zD(f_{r\zb}-f_{s\zb})|\to 0\;,\ee if $r,s\to\infty$, see Example \ref{FrechetSmoothFctBase}. In this case, we get, for any $\Z_2^n$-differential operator $D$ (acting on $\Z_2^n$-functions $\cO(U)$) and any compact $C\subset U$, $$p_{C,D}(f_r-f_s)=\sup_C|\ze(D(f_r-f_s))|\le\sum_{\za\zb}\,\zb!\,\sup_C|\ze(D_{\za\zb}(x,\xi))|\,\sup_C|\p_x^\za (f_{r\zb}-f_{s\zb})|\to 0\;,$$ if $r,s\to\infty$, so that $f_n$ is Cauchy in the topology of $\cO(U)$. Conversely, if $f_n$ is Cauchy in $\cO(U)$, we have to show that \eqref{CyCi} holds for any $\zD$, $C$, and $\zb$. Fix these three data. The base differential operator $\zD$ reads $$\zD=\sum_\za \ \zD_\za(x)\,\p_x^\za$$ and $D=\p_\xi^\zb\zD$ is a $\Z_2^n$-differential operator. In view of \eqref{EpsDiff}, we get $$\sup_C|\ze(D(f_r-f_s))|=\zb!\,\sup_C|\sum_\za \zD_\za(x)\,\p_x^\za(f_{r\zb}-f_{s\zb})|=\zb!\,\sup_C|\zD(f_{r\zb}-f_{s\zb})|\;.$$ The proof of the convergence statement of item one in Lemma \ref{LEM2} is similar.\medskip

$\bullet\,$ In view of Example \ref{FrechetSmoothFctBase} and item one, item two is obvious.\medskip

$\bullet\,$ To prove that the complete Hausdorff locally convex topological vector space $\cO(U)$ is a Fr\'echet space, it suffices to show that there exists a countable family of seminorms on $\cO(U)$ that is equivalent to the family $(p_{C,D})_{C,D}$. To prove that $\Ci(U)$ is a Fr\'echet space, one uses a countable cover of $U$ by compact subsets $C_n\subset U$ such that $C_n$ is contained in the interior of $C_{n+1}$ \cite{Tre}. Proceeding similarly, we define the family \be\label{Cnalphabeta}p_{C_n,\za,\zb}(f)=\sup_{C_n}|\ze(\p_\xi^\zb\p_x^\za f)|\;,\ee where the components of $\za$ belong to $\N$ and the components of $\zb$ to $\N$ or $\Z_2$ depending on whether they correspond to even degree or odd degree parameters. Since the $p_{C_n,\za,\zb}$ are specific $p_{C,D}\,$, they are a countable family of seminorms on $\cO(U)$. To show that the countable family is equivalent to the original one, we use Proposition \ref{Criterion}. For any $p_{C,D}$ there exists $C_n\supset C$, so that $$p_{C,D}(f)=\sup_C|\ze(D(f))|\le \sup_{C_n}|\sum_{\za\zb}\ze(D_{\za\zb}(x,\xi))\,\ze(\p_\xi^\zb\p_x^\za f)|\le $$ \be\label{EquivCDCount} (1+\max_{\za\zb}\sup_{C_n}|\ze(D_{\za\zb}(x,\xi))|)\,\sum_{\za\zb}\sup_{C_n}|\ze(\p_\xi^\zb\p_x^\za f)|=\cC\,\sum_{\za\zb}p_{C_n,\za,\zb}(f)\;.\ee The similar condition for $p_{C,D}$ and $p_{C_n,\za,\zb}$ exchanged is obviously satisfied, since any $p_{C_n,\za,\zb}$ is a specific seminorm of the type $p_{C,D}\,$.\medskip

Finally the $\Z_2^n$-graded vector space $\cO(U)$ is a Fr\'echet space. This should of course mean that $\cO(U)$ is a $\Z_2^n$-graded Fr\'echet vector space in the sense of Definition \ref{FNAS}. In the paragraph following that definition, we mentioned that, if the homogeneous subspaces $\cO^\zg(U)$, $\zg\in\Zn$, are Fr\'echet, then $\cO(U)$ is Fr\'echet as well. A rigorous application of Definition \ref{FNAS} requires now that we take an interest in the converse result. However, what we proved so far is valid for any functions in $\cO(U)$, in particular for the functions of a fixed degree, i.e., for the functions in $\cO^\zg(U)$, $\zg\in\Z_2^n$. It follows that all spaces $\cO^\zg(U)$, $\zg\in\Zn$, are Fr\'echet spaces for the seminorms considered. Hence, the space $\cO(U)$ is a $\Zn$-graded Fr\'echet space in the sense of Definition \ref{FNAS}.\medskip

Alternatively, the reader may observe that any subspace of a Fr\'echet space, which contains the limits of its converging sequences, is itself a Fr\'echet space. Indeed, the restrictions to this subspace of the countable and separating family of seminorms of the total space is again a countable and separating family of seminorms. In view of Proposition \ref{Const}, the resulting locally convex seminorm topology of the subspace is implemented by a translation-invariant metric. To be Fr\'echet, the subspace must still be complete with respect to this metric, i.e., it has to be complete with respect to the seminorm topology. Now, any Cauchy sequence of the subspace is Cauchy in the total space and converges therefore in the total space. But, by assumption, its limit is located in the subspace, so that the subspace is complete with respect to its topology. In the case considered here, any homogeneous subspace $\cO^\zg(U)$ of the Fr\'echet space $\cO(U)$ contains the limits of its converging sequences (in view of point 1 of the preceding lemma) and is thus Fr\'echet (so we can conclude again that $\cO(U)$ is a $\Zn$-graded Fr\'echet vector space in the sense of Definition \ref{FNAS}).\medskip

Of course $\cO(U)$ is equipped with a $\Zn$-graded $\Zn$-commutative associative unital $\R$-algebra structure. Hence $\cO(U)$ is a $\Zn$-graded Fr\'echet algebra, if we can find equivalent countable families of submultiplicative seminorms. We choose the countable family of seminorms defined, for all compacts $C_n$ and all $m,\zm\in\N$, by \be\label{Cnmmu}\zr_{C_n,m,\zm}(f)=2^{m+\zm}\sup_{\!\tiny\begin{array}{c}|\za|\le m\\|\zb|\le\zm\end{array}}\!\!p_{C_n,\za,\zb}(f)=2^{m+\zm}\sup_{\!\tiny\begin{array}{c}|\za|\le m\\|\zb|\le\zm\end{array}}\!\!\sup_{C_n}|\ze(\p_\xi^\zb\p_x^\za f)|\;,\ee see Equation \eqref{Cnalphabeta}. These seminorms are submultiplicative. Indeed, for $f,g\in\cO(U)$, $|\za|\le m$, $|\zb|\le\zm$, and $x\in C_n$, we have, in view of \eqref{EpsDiff}, $$|\ze(\p_\xi^\zb\p_x^\za(f\cdot g))|=|\zb!\,\p_x^\za(f\cdot g)_\zb|=|\zb!\,\p_x^\za\sum_{\zb_1+\zb_2=\zb}f_{\zb_1}\cdot g_{\zb_2}|\le$$ $$\sum_{\zb_1+\zb_2=\zb}\frac{\zb!}{\zb_1!\zb_2!}\,\sum_{\za_1+\za_2=\za}\frac{\za!}{\za_1!\za_2!}\,|\zb_1!\,\p_x^{\za_1}f_{\zb_1}|\cdot |\zb_2!\,\p_x^{\za_2}g_{\zb_2}|\le$$ $$\left(\sum_{\zb_1+\zb_2=\zb}\frac{\zb!}{\zb_1!\zb_2!}\sum_{\za_1+\za_2=\za}\frac{\za!}{\za_1!\za_2!}\right)\sup_{\!\tiny\begin{array}{c}|\za_1|\le m\\|\zb_1|\le\zm\end{array}}\!\!\sup_{C_n}|\ze(\p_\xi^{\zb_1}\p_x^{\za_1}f)|\,\cdot\, \sup_{\!\tiny\begin{array}{c}|\za_2|\le m\\|\zb_2|\le\zm\end{array}}\!\!\sup_{C_n}|\ze(\p_\xi^{\zb_2}\p_x^{\za_2}g)|\le$$ $$2^{m+\zm}\,\sup_{\!\tiny\begin{array}{c}|\za_1|\le m\\|\zb_1|\le\zm\end{array}}\!\!\sup_{C_n}|\ze(\p_\xi^{\zb_1}\p_x^{\za_1}f)|\,\cdot\, \sup_{\!\tiny\begin{array}{c}|\za_2|\le m\\|\zb_2|\le\zm\end{array}}\!\!\sup_{C_n}|\ze(\p_\xi^{\zb_2}\p_x^{\za_2}g)|\;,$$ since the sum over $\zb_1,\zb_2$, for instance, is equal to $2^{|\zb|}$. It follows that $$\zr_{C_n,m,\zm}(f\cdot g)\le\zr_{C_n,m,\zm}(f)\cdot\zr_{C_n,m,\zm}(g)\;.$$ Moreover, the family $\zr_{C_n,m,\zm}$ is equivalent to the family $p_{C_n,\za,\zb}$. On the one hand, we have $$\zr_{C_n,m,\zm}(f)\le 2^{m+\zm}\sum_{\!\tiny\begin{array}{c}|\za|\le m\\|\zb|\le\zm\end{array}}\,p_{C_n,\za,\zb}(f)\;,$$ and on the other, setting $|\za|=m,|\zb|=\zm$, we get $p_{C_n,\za,\zb}(f)\le\zr_{C_n,m,\zm}(f).$\medskip

It remains to show that $\cO(U)$ is a $\Z_2^n$-graded nuclear {\small LCTVS} in the sense of Definition \ref{FNAS}. Since any subspace of a nuclear space is nuclear, it suffices to prove that the locally convex space $\cO(U)$ is nuclear. We set $\mathbf{q}=(\mathbf{q'},\mathbf{q''})$, where $\mathbf{q'}=(q_1,\ldots,q_{2^{n-1}-1})$ and $\mathbf{q''}=(q_{2^{n-1}},\ldots,q_{2^n-1})$ give the number of parameters in each nonzero even degree and the number of parameters in each odd degree, respectively. We also set $\cA=\N^{|\mathbf{q'}|}\times\Z_2^{|\mathbf{q''}|}$. Further, we consider the coordinate order \eqref{CoordOrder} and we order the monomials $\xi^\za=\zy^\zb\zz^\zg$ using the lexicographic order with respect to $\za=(\zb,\zg)$. This leads to a linear vector space isomorphism $$i:\cO(U)\simeq\Ci(U)[[\xi]]\ni\sum_{\za\in \cA} f_\za(x)\xi^\za\mapsto (f_\za)_{\za\in \cA}\in\prod_{\za\in \cA}\Ci(U)\;.$$ We identify $\cO(U)$ with $\prod_{\za\in \cA}\Ci(U)$ via $i$, so that $i=\id$. Since $\Ci(U)$ is nuclear, see Example \ref{ExNuc}, and since any product of nuclear {\small LCTVS}-s is a nuclear {\small LCTVS} for the product topology, the space $\cO(U)$ is nuclear for this topology. It is known that, if $\zp_b:\prod_{a\in A}V_a\to V_b$ is a product of {\small LCTVS}-s $V_a$, whose locally convex topologies are implemented by families of seminorms $(\zr^a_{i})_i$, then the locally convex product topology is implemented by the family of seminorms $(\tilde\zr_i^a)_{a,i}$ defined by $\tilde\zr_i^a=\zr_i^a\circ\zp_a\,$. Hence, in the case considered here, the product topology is given by the family of seminorms $\tilde p^\za_{\zD,C}=p_{\zD,C}\circ\zp_\za\,$. Of course, the standard locally convex topology on $\cO(U)$, i.e., the seminorm topology of the family $p_{C,D}$, must coincide with the product topology, i.e., the families of seminorms $\tilde p^\za_{\zD,C}$ and $p_{C,D}$ must be equivalent. Let therefore $\zb\in\cA$, $f\in\cO(U)$, let $\zD$ be a differential operator acting on $\Ci(U)$, and let $C$ be a compact subset of $U$. When noticing that $D=\frac{1}{\zb!}\p_\xi^\zb\zD$ is a differential operator acting on $\cO(U)$, we obtain $$\tilde p^\zb_{\zD,C}(f)=p_{\zD,C}(\zp_\zb(f))=\sup_C|\zD(f_\zb)|=\sup_C|\ze(\frac{1}{\zb!}\p_\xi^\zb\zD(f))|=p_{C,D}(f)\;.$$ Conversely, for any differential operator $D$ acting on $\cO(U)$, we have $$p_{C,D}(f)=\sup_C|\ze(D(f))|= \sup_{C}|\sum_{\za\zb}\ze(\zb!\,D_{\za\zb}(x,\xi))\,\p_x^\za f_\zb|\le $$ \be\label{EquivCDCount} (1+\max_{\za\zb}\sup_{C}|\ze(\zb!\,D_{\za\zb}(x,\xi))|)\,\sum_{\za\zb}\sup_{C}|\p_x^\za f_\zb|=\cC\,\sum_{\za\zb}\tilde p^\zb_{\p_x^\za,C}(f)\;.\ee\end{proof}

\begin{cor}\label{TVSIsomCor} For any open subset $\zW\subset\R^p$, the map \be\label{TVSIsom}\Ci(\zW)[[\xi]]\ni\sum_{\za\in \cA} f_\za(x)\xi^\za\mapsto (f_\za)_{\za\in \cA}\in\prod_{\za\in \cA}\Ci(\zW)\;,\ee where the source $(\,$resp., the target$\,)$ is equipped with the standard topology induced by $(p_{C,D})_{C,D}$ $(\,$resp., the product topology of the standard topologies induced by $(p_{\zD,C})_{\zD,C}$$\,)$, is an isomorphism of {\small TVS}-s.\end{cor}

The next lemma will allow us to almost complete the proof of Theorem \ref{MainTheo}.

\begin{lem}\label{LEM3} Let $U\subset M$ be an open subset. \begin{itemize}\item Let $(U_i)_{i\in I}$ be an open cover of $U$, let $f_n$, $n\in\N$, and $f$ be $\Zn$-functions in $\cO(U)$. The sequence $f_n$ is a Cauchy sequence in $\cO(U)$ $(\,$resp., converges in $\cO(U)$ to $f$$\,)$ if and only if the sequence $f_n|_{U_i}$ of restrictions is a Cauchy sequence in $\cO(U_i)$ $(\,$resp., converges in $\cO(U_i)$ to the restriction $f|_{U_i}$$\,)$, for all $i\in I$. \item The space $\cO(U)$ is a $\Zn$-graded Fr\'echet algebra.\end{itemize}\end{lem}

\begin{proof} Both statements of Item 1 are of the type \be\label{GlobConv}p_{C,D}(f_n)=\sup_{C}|\ze(D(f_n))|\to 0\ee if $n\to\infty$, for all compact subsets $C\subset U$ and all $\Z_2^n$-differential operators $D$ on $U$, if and only if, for all $i,$ \be\label{LocConv}p_{C_i,D_i}(f_n|_{U_i})=\sup_{C_i}|\ze(D_i(f_n|_{U_i}))|\to 0\ee if $n\to\infty$, for all compact subsets $C_i\subset U_i$ and all $\Zn$-differential operators $D_i$ on $U_i\,$.\medskip

Assume first that \eqref{GlobConv} holds, let $C_i,D_i$ be as said, and consider a bump function $\zg\in\cO(U)$ that equals 1 in an open neighborhood $V$ of $C_i$ and is compactly supported in $U_i$. Since $C_i$ is a compact subset of $U$ and since $\zg D_i$ is a $\Zn$-differential operator on $U$, we get $$p_{C_i,D_i}(f_n|_{U_i})=\sup_{C_i}|\ze(\zg)\ze(D_i(f_n|_{U_i}))|=\sup_{C_i}|\ze((\zg D_i)(f_n))|\to 0\;,$$ if $n\to\infty$. Conversely, if \eqref{LocConv} holds and if $C,D$ are as above, there exists a finite open cover $(V_j)_{j}$ of $C$ such that each $\bar V_j$ is compact and each $\bar V_j$ is contained in some $U_{i(j)}$ \cite{CCF}. Then, $$p_{C,D}(f_n)=\sup_{C}|\ze(D(f_n))|\le \sup_{\cup_j\bar V_j}|\ze(D(f_n))|\le\sum_j\sup_{\bar V_j}|\ze(D|_{U_{i(j)}}(f_n|_{U_{i(j)}}))|\to 0\;,$$ if $n\to\infty$.\medskip

It remains to show that the $\Z_2^n$-graded associative $\R$-algebra $\cO(U)$ is a $\Z_2^n$-graded Fr\'echet algebra for its standard Hausdorff locally convex topology given by the family $(p_{C,D})_{C,D}$, see Lemma \ref{LEM1}. Note that the space $\cO(U)$ is complete. Indeed, let $U_i$ be a countable open cover of $U$ by $\Zn$-chart domains and let $f_n$ be a Cauchy sequence in $\cO(U)$. In view of Item 1 and Lemma \ref{LEM2}, the sequence $f_n|_{U_i}$ converges to $f_{U_i}$ in $\cO(U_i)$. When applying Item 1 to the open cover $U_{ij}=U_i\cap U_j$ of $U_i$, to the sequence $f_n|_{U_i}$, and the function $f_{U_i}$, we find that $f_n|_{U_{ij}}$ converges in $\cO(U_{ij})$ to $f_{U_i}|_{U_{ij}}$. One sees similarly that it converges also to $f_{U_j}|_{U_{ij}}$. Hence, there is a unique function $f\in\cO(U)$ such that $f|_{U_i}=f_{U_i}$. It now follows from Item 1 that $f_n\to f$ in $\cO(U)$, so that $\cO(U)$ is complete. To conclude that $\cO(U)$ is a $\Zn$-graded Fr\'echet algebra, we have to provide an equivalent countable submultiplicative family of seminorms on $\cO(U)$. It actually suffices to proceed as above, see \eqref{Cnmmu}. More precisely, for each one of the countably many $\Zn$-chart domains $U_i$, we can choose a countable cover of $U_i$ by compact subsets $C_{n,i}\subset U_i$ such that $C_{n,i}$ is contained in the interior of $C_{n+1,i}\,$. The family $$\zr_{C_{n,i},m,\zm}(f)=2^{m+\zm}\sup_{\!\tiny\begin{array}{c}|\za|\le m\\|\zb|\le\zm\end{array}}\!\!\sup_{C_{n,i}}|\ze(\p_\xi^\zb\p_x^\za f|_{U_i})|\;,$$ where $m,\zm\in\N$, is the searched countable equivalent submultiplicative family of seminorms on $\cO(U)$. \end{proof}

Theorem \ref{MainTheo} can now be proved as follows. Let $U$ be any open subset of $M$ and $(U_i)_{i\in I}$ any open cover of $U$. In view of Item 1 of Lemma \ref{LEM3}, the restriction maps $\cO(U)\ni f\mapsto f|_{U_i}\in\cO(U_i)$ are sequentially continuous. Since a map from a metrizable {\small TVS} into a {\small TVS} is continuous if and only if it is sequentially continuous \cite{Tre}, the preceding restrictions are continuous (in particular any restriction is continuous). The fact that $\cO(U)$, $U\subset M$, carries the coarsest or initial topology with respect to these restrictions is a consequence of the open mapping theorem for Fr\'echet spaces and of the fact that $M$ is second countable \cite{Mal}. In view of this initial topology property, the Fr\'echet sheaf $\cO$ of $\Zn$-graded algebras is nuclear if $\cO(U)$ is nuclear for any open $U\subset M$, or, equivalently, if $\cO(U)$ is nuclear for any $U$ of an open basis of $M$ \cite{Mal}. As mentioned above, any open subset of $M$ is a union of open $\Zn$-chart domains, so that it suffices that $\cO(U)$ be nuclear for any $\Zn$-chart domain $U$, which has been proven in Lemma \ref{LEM2}. \end{proof}

We mentioned above that any $\Zn$-differential operator $D:\cO\to\cO$ and any $\Zn$-morphism $\Phi=(\zf,\zf^*):(M,\cO_M)\to (N,\cO_N)$ are $\cJ$-adically continuous. The adic continuity allowed us to conclude that differential operators and pullbacks act on series by acting on each of their terms. The next proposition states that differential operators and pullbacks are continuous with respect to the standard locally convex topology of $\cO$. This property will be of importance later on.

\begin{theo}\label{Cont} Let $\cM=(M,\cO_M)$ and ${\cN}=(N,\cO_N)$ be $\Zn$-manifolds, let $D$ be a $\Zn$-differential operator on $\cM$ and let $\Phi=(\zf,\zf^*):\cM\to\cN$ be a $\Zn$-morphism. \begin{itemize} \item For any open $U\subset M$, the differential operator $D_U:\cO_M(U)\to\cO_M(U)$ is continuous for the standard locally convex topology of $\cO_M(U)$. \item For any open $V\subset N$, the pullback $\zf^*_V:\cO_N(V)\to\cO_M(\zf^{-1}(V))$ is continuous for the standard locally convex topologies of $\cO_N(V)$ and $\cO_M(\zf^{-1}(V))$.
\end{itemize}\end{theo}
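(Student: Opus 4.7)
\medskip
\noindent\textbf{Proof plan.} The first item is essentially formal. For any defining seminorm $p_{C,D'}$ of $\cO_M(U)$ one has
\[
p_{C,D'}(D_U f)=\sup_{x\in C}|\ze((D'\circ D_U)(f))(x)|=p_{C,D'\circ D_U}(f),
\]
and since $D'\circ D_U\in\cD(U)$, the right-hand side is itself a defining seminorm of $\cO_M(U)$. The continuity of $D_U$ follows at once (with operator seminorm $\le 1$).

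\medskip

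For the second item, the plan is to reduce to a local computation on a chart. By Lemma \ref{LEM3} the topology of $\cO_M(\zf^{-1}(V))$ is the initial topology for restrictions to any open cover, so it suffices to show that for every $\Zn$-chart domain $U\subset\zf^{-1}(V)$ the composition $\cO_N(V)\xrightarrow{\zf^*_V}\cO_M(\zf^{-1}(V))\xrightarrow{\;|_U\;}\cO_M(U)$ is continuous. Shrinking $U$ if necessary we may further arrange that $\zf(U)\subset V'$ for some $\Zn$-chart domain $V'\subset V$; composing with the continuous restriction $\cO_N(V)\to\cO_N(V')$ then reduces the problem to the case where both $V=V'$ and $U$ are chart domains.

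\medskip

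In this local setting, Corollary \ref{TVSIsomCor} identifies $\cO_N(V)\simeq\prod_\zb\Ci(V)$ and $\cO_M(U)\simeq\prod_\zg\Ci(U)$ as {\small TVS}-s. Since a map into a product of {\small TVS}-s is continuous iff every component is, it is enough to show that for each multi-index $\zg$ the component $g\mapsto \zf^*(g)_\zg\in\Ci(U)$ is continuous. Choose coordinates $(y,\eta)$ on $V$ and $(x,\xi)$ on $U$, and write $\zf^*(y^j)=y^j\circ\zf+s^j$ with $s^j\in\cJ(U)$ and $\zf^*(\eta^a)\in\cJ(U)$. Combining the $\cJ$-adic continuity of $\zf^*$ with the Taylor expansion of each smooth coefficient $g_\zb$ yields, for $g=\sum_\zb g_\zb(y)\eta^\zb$,
\[
\zf^*(g)=\sum_{\za,\zb}\tfrac{1}{\za!}\bigl((\p_y^\za g_\zb)\circ\zf\bigr)\cdot s^\za\cdot\zf^*(\eta)^\zb.
\]
Because $s^\za\cdot\zf^*(\eta)^\zb\in\cJ^{|\za|+|\zb|}(U)$, when one extracts the coefficient of $\xi^\zg$ only the finitely many pairs $(\za,\zb)$ with $|\za|+|\zb|\le|\zg|$ contribute, and so $\zf^*(g)_\zg$ is a \emph{finite} $\Ci(U)$-linear combination of terms of the form $(\p_y^\za g_\zb)\circ\zf$.

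\medskip

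Each elementary piece in this finite sum is continuous between the relevant Fr\'echet spaces: the projection $g\mapsto g_\zb$ via the product decomposition of $\cO_N(V)$, the classical differential operator $\p_y^\za$ on $\Ci(V)$, the pullback $h\mapsto h\circ\zf$ from $\Ci(V)$ to $\Ci(U)$ (continuous since $\zf$ is smooth), and multiplication by a fixed smooth function on $U$. A finite $\R$-linear combination of continuous maps is continuous, so $g\mapsto \zf^*(g)_\zg$ is continuous, which completes the proof. The only non-formal step in this plan is the local Taylor formula for $\zf^*$ together with the observation that each $\xi^\zg$-coefficient is a finite sum; everything else reduces to standard continuity properties of the $\Ci$-topology.
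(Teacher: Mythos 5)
Your proof is correct, but it takes a genuinely different route from the paper's for the second item (the first item is the same one-line observation, namely that $p_{C,D'}\circ D_U=p_{C,D'\circ D_U}$ is again a defining seminorm because $\cD(U)$ is closed under composition). Both proofs make the same initial reduction to chart domains $U\subset\zf^{-1}(V)$ and $V'\supset\zf(U)$ via the initial-topology property of the sheaf topology, but then they diverge. The paper works with sequences: it estimates $p_{C,D}(\zf^*f_n)$ directly by pushing $\p_u^\za$ through $\zf^*$ with the $\Z_2^n$ Fa\`a di Bruno formula of [CKP16], obtaining a bound by finitely many seminorms $\sup_{\zf(C)}|\ze(\p_v^\zg f_n)|$ on the target. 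You instead exploit Corollary \ref{TVSIsomCor} to identify both chart algebras with products $\prod\Ci$, reduce to continuity of each coefficient map $g\mapsto\zf^*(g)_\zg$, and compute that coefficient from the Taylor-series normal form $\zf^*(g_\zb)=\sum_\za\frac{1}{\za!}\bigl((\p_y^\za g_\zb)\circ\zf\bigr)s^\za$ of a pullback, where the membership $s^\za\zf^*(\eta)^\zb\in\cJ^{|\za|+|\zb|}(U)$ guarantees that only finitely many terms contribute to a fixed $\xi^\zg$-coefficient. This reduces everything to classical facts about the Fr\'echet space $\Ci$ (continuity of $\p_y^\za$, of $h\mapsto h\circ\zf$ for smooth $\zf$, and of multiplication by a fixed function), and it yields slightly more, namely an explicit seminorm estimate rather than mere sequential continuity. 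The trade-off is in the external input: the paper imports the $\Z_2^n$ Fa\`a di Bruno formula, while you import the Taylor-expansion description of $\zf^*$ from [CGP14], which the paper never states; if you wanted the argument to be self-contained in the sense of this paper you would need to justify that formula (agreement with $\zf^*$ on polynomials plus $\frak m$-adic continuity and the polynomial approximation theorem), and note that the continuity of $h\mapsto h\circ\zf$ on $\Ci$ itself hides the classical chain-rule estimate that the paper performs in its graded form. Neither point is a gap, only a relocation of the combinatorial work.
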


\begin{proof} Since in both cases the source {\small TVS} is metrizable, it suffices to prove that the linear maps $D_U$ and $\phi^*_N$ are sequentially continuous. Item 1 is obvious in view of Equation \eqref{StandSemiNormTop}.\medskip

We prove Item 2 first in the case $V=N$. Let $f_n\to 0$ in $\cO(N)$ and show that $\phi^*f_n\to 0$ in $\cO(M)$ (we omit subscripts $N$ and $M$). It follows from Item 1 of Lemma \ref{LEM3} that both convergences are equivalent to the convergences to 0 of the restrictions of $f_n$ and $\phi^*f_n$ to an arbitrary open cover of $N$ and $M$, respectively. Let $N=\cup_jV_j$ be an open cover of $N$ by $\Zn$-chart domains of $\cN$. We thus get $M=\cup_j\,\phi^{-1}(V_j)$ an open cover of $M$. Each $\phi^{-1}(V_j)$ can again be covered by $\Zn$-chart domains $U_{j,i}$ of $\cM$. Since $M=\cup_{ji}\,U_{j,i}$, it suffices to prove that $(\phi^*f_n)|_{U_{j,i}}\to 0$ in $\cO(U_{j,i})$, knowing that $f_n|_{V_j}\to 0$ in $\cO(V_j)$. We will omit the subscripts $j,i$ in the arbitrarily chosen $U_{j,i}\subset \phi^{-1}(V_j)$, as well as subscript $j$ in $V_j$, and we will write $\zn_n$ for the restriction $f_n|_V$. We denote by $u=(u^A)=(x^a,\xi^\frak{a})$ the coordinates in $U$ and by $v=(v^B)=(y^b,\zh^\frak{b})$ the coordinates in $V$. Let now $C\subset U$ be a compact subset and let $$D=\sum_\za D_\za(u)\p_u^\za$$ be a differential operator acting on $\cO(U)$. We have to prove that
\be\label{SNConv}
p_{C,D}((\phi^*f_n)|_U)=\sup_C|\ze(D((\phi^*f_n)|_U))|=\sup_C|\ze\sum_\za D_\za(u)\,\p_u^\za(\phi^*\zn_n)|_U|\to 0\;.
\ee
The $\Zn$ chain rule \cite{CKP}
$$
\partial_{u^A} (\phi^*f) = \sum_B \partial_{u^A} (\phi^*v^B) \, \phi^* (\partial_{v^B} f)
$$
extends to the $\Zn$ Fa\`a di Bruno formula
$$
\p_u^\za(\phi^*f) = \sum\left(\prod \cC\,\p_u^\zb(\phi^*v^B)\right)\,\phi^*(\p_v^\zg f)\;,
$$
where the sum and products are finite, where $\cC$ denotes real numbers, and where we limited ourselves to the structure of this complex result. It follows that the supremum in \eqref{SNConv} reads
\be\label{SNConv2}
\sup_C\left|\ze\sum_\za \sum\left(D_\za(u)\prod \cC\,\p_u^\zb(\phi^*v^B)\right)\,\phi^*(\p_v^\zg \zn_n)\right|=\sup_C\left|\ze\sum F(u)\,\phi^*(\p_v^\zg \zn_n)\right|=\sup_C|-|\;,
\ee
where we omitted the restrictions to $U$ and where $F\in\cO(U)$. We get
$$
\sup_C|-|=\sup_C\left|\sum (\ze\, F)(x)\,\ze(\p_v^\zg \zn_n)(\phi(x))\right|\le
$$
$$
\sum \sup_C|\ze F|\,\sup_C|\ze(\p_v^\zg \zn_n)(\phi(x))|=\sum \sup_C|\ze F|\,\sup_{\phi(C)}|\ze(\p_v^\zg \zn_n)|\;.
$$
Since $\zn_n\to 0$ in $\cO(V)$ and $\phi(C)$ is a compact subset of $V$, the conclusion follows.\medskip

In order to extend the conclusion from $\zf^*_N$ to $\zf^*_V$, for any open $V\subset N$, note that we can restrict the $\Zn$-morphism $$(\zf,\zf^*):(M,\cO_M)\to(N,\cO_N)$$ to a $\Zn$-morphism $(\zvf,\zvf^*)$ between the open $\Zn$-submanifolds $(U,\cO_M|_U)$ and $(V,\cO_N|_V)$, where $U=\zf^{-1}(V)$. It suffices to set $\zvf=\zf|_U:U\to V$, and to set, for any open $W\subset V$, $$\zvf^*_W=\zr^{\zf^{-1}(W)}_{U\cap\,\zf^{-1}(W)}\circ\zf^*_W:\cO_N(W)\to \cO_M(\zvf^{-1}(W))\;.$$ Indeed, the base map $\zvf$ is smooth and the pullbacks $\zvf^*_W$ are $\Zn$-graded unital $\R$-algebra morphisms, which commute with restrictions. Applying now the first part of our proof of Item 2, we get that $\zvf^*_V:\cO_N(V)\to \cO_M(\zf^{-1}(V))$ is continuous. This concludes the proof, since $\zvf^*_V=\zf^*_V$.
\end{proof}

\section{Appendix}

In this section, we recall basic results and provide examples.

\subsection{Fr\'echet spaces}

\subsubsection{Definitions and construction}

\begin{rem} All vector spaces considered in the present text are spaces over the field $\R$ of real numbers.\end{rem}

\begin{defi} A topological vector space ({\small TVS}) is \emph{locally convex} if its topology has a basis made of convex subsets, i.e., subsets $U$ such that, for any $x,y\in U$, the segment $\{(1-t)x+ty:t\in[0,1]\}$ is contained in $U$.
\end{defi}

\begin{defi}\label{Frechet} A locally convex topological vector space ({\small LCTVS}) is a \emph{Fr\'echet space} if its topology can be implemented by a translation-invariant metric with respect to which it is (sequentially) complete.
\end{defi}

Recall that a metrizable {\small TVS} is complete if and only if it is sequentially complete.\medskip

The standard construction of a Fr\'echet space starts from a family of seminorms. The difference between a seminorm $p$ on a vector space and a norm $||-||$ is that $p(x)=0$ does not imply that $x=0$. Recall also that a family of seminorms $(p_i)_{i\in I}$ \emph{separates points} (or is \emph{separating}), if for $x\neq 0$, there is $i\in I$ such that $p_i(x)\neq 0$.\medskip

The following proposition is almost obvious and will not be proven.

\begin{prop} Let $(p_i)_{i\in I}$ be a family of seminorms on a vector space $V$. For any $i\in I,x\in V,\ze>0$, set $$B_i(x,\ze)=\{y\in V:p_i(y-x)<\ze\}\;.$$ The family of all the subsets $B_i(x,\ze)$ generates a topology on $V$ $(\,$recall that this topology is made of the unions of finite intersections of subsets $B_i(x,\ze)$$\,)$. We refer to this topology as the \emph{seminorm topology} induced by $(p_i)_{i\in I}$. The finite intersections of subsets $B_i(x,\ze)$ \emph{with fixed $x$ and $\ze$}, form a basis $\cB$ of the seminorm topology: $$\cB=\{\cap_{j=0}^nB_{i_j}(x,\ze):n\in\N, i_j\in I, x\in V, \ze>0\}\;.$$ The elements of this basis are convex. The open subsets $U$ of the seminorm topology are characterized by the property that for any $x\in U$, there is a basis element $\cap_{j=0}^nB_{i_j}(x,\ze)$, which is contained in $U$. The seminorm topology endows $V$ with a structure of {\small LCTVS}.\end{prop}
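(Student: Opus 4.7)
The plan is to verify each assertion of the proposition in turn, using only the triangle inequality and positive homogeneity of seminorms together with the standard recipe for generating a topology from a family of subsets. First, I would recall that any family $\cS$ of subsets of $V$ generates a topology $\cT(\cS)$ whose open sets are precisely the arbitrary unions of finite intersections of members of $\cS$, so applying this to $\cS=\{B_i(x,\ze):i\in I,\,x\in V,\,\ze>0\}$ defines the seminorm topology unambiguously. To identify $\cB$ as a basis, the key calculation is that any finite intersection of balls $B_{i_j}(x_j,\ze_j)$ with possibly distinct centers and radii can, around each of its points, be shrunk to a member of $\cB$. If $y$ lies in such an intersection, then $p_{i_j}(y-x_j)<\ze_j$ for every $j$; choosing $\zd>0$ strictly smaller than $\min_j(\ze_j-p_{i_j}(y-x_j))$ and applying the triangle inequality for each $p_{i_j}$, one obtains $\cap_j B_{i_j}(y,\zd)\subset\cap_j B_{i_j}(x_j,\ze_j)$. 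This gives simultaneously the fact that $\cB$ generates the same topology as the full family of balls and the characterization of open sets via local containment of a basis element.

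Next, each $B_i(x,\ze)$ is convex: for $y_1,y_2\in B_i(x,\ze)$ and $t\in[0,1]$, the inequality $p_i((1-t)(y_1-x)+t(y_2-x))\le(1-t)p_i(y_1-x)+t\,p_i(y_2-x)<\ze$ yields the claim, and a finite intersection of convex sets is convex, so every element of $\cB$ is convex. To conclude that $V$ with this topology is a locally convex {\small TVS}, I would verify joint continuity of the two vector space operations. For addition at $(x,y)$, any basic neighborhood $\cap_j B_{i_j}(x+y,\ze)$ of $x+y$ contains $U_x+U_y$ with $U_x=\cap_j B_{i_j}(x,\ze/2)$ and $U_y=\cap_j B_{i_j}(y,\ze/2)$, again by the triangle inequality. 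For scalar multiplication at $(\zl_0,x_0)$, I would use the decomposition $\zl x-\zl_0 x_0=(\zl-\zl_0)x_0+\zl_0(x-x_0)+(\zl-\zl_0)(x-x_0)$ together with subadditivity and positive homogeneity of each $p_i$ to bound $p_i(\zl x-\zl_0 x_0)$ by a quantity as small as desired, by restricting $|\zl-\zl_0|$ and $p_i(x-x_0)$ suitably.

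There is no genuine obstacle: as the authors point out, the proof is entirely mechanical. The one subtle point is ensuring that $\cB$, whose members are intersections of balls sharing a common center and common radius, is in fact a basis for the topology generated by all the $B_i(x,\ze)$, whose members have independent centers and radii; this is exactly what the triangle inequality argument in the first paragraph delivers.
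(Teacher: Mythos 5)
Your proof is correct and is the standard argument; the paper itself offers no proof to compare against, stating only that the proposition ``is almost obvious and will not be proven.'' Your writeup supplies exactly the expected details --- the triangle-inequality shrinking argument that reduces intersections of balls with arbitrary centers and radii to a common-center, common-radius element of $\cB$, convexity via subadditivity, and joint continuity of the vector space operations --- so it fills the omission correctly without deviating from the intended route.
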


\begin{prop}\label{Equiv} Let $(p_i)_{i\in I}$ be a family of seminorms on a vector space $V$. A sequence $(x_k)_{k\in\N}$ of elements of $V$ converges to $x\in V$ with respect to the seminorm topology, if and only if it converges to $x$ with respect to each seminorm, i.e., if and only if $$\lim_kp_i(x_k-x)= 0,\;\forall i\in I\;.$$ Similarly, the sequence $(x_k)_{k\in\N}$ is a Cauchy sequence with respect to the seminorm topology, if and only if it is Cauchy with respect to each seminorm. \end{prop}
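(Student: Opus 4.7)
The plan is to use the explicit description of the seminorm topology given in the preceding proposition, namely that the sets $\cap_{j=0}^n B_{i_j}(x,\varepsilon)$, for $n\in\N$, $i_j\in I$, and $\varepsilon>0$, form a neighborhood basis of $x$. Since the seminorm topology is translation-invariant (the basis at $x$ is obtained from the basis at $0$ by translation), it will suffice, for each claim, to test membership of $x_k-x$ (respectively, $x_r-x_s$) in basic neighborhoods of $0$, and then to notice that $y\in B_i(0,\varepsilon)$ is exactly the condition $p_i(y)<\varepsilon$.

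For the convergence statement I would first prove the forward direction. Assume $x_k\to x$ in the seminorm topology. For any fixed $i\in I$ and $\varepsilon>0$, the set $B_i(x,\varepsilon)$ is itself a basis element (take $n=0$), so it is a neighborhood of $x$ and contains all $x_k$ for $k$ sufficiently large, which means $p_i(x_k-x)<\varepsilon$ eventually; since $\varepsilon$ is arbitrary this gives $\lim_k p_i(x_k-x)=0$. Conversely, assume $\lim_k p_i(x_k-x)=0$ for all $i\in I$, and let $U$ be any open neighborhood of $x$. By the definition of the seminorm topology there exists a basic neighborhood $\cap_{j=0}^n B_{i_j}(x,\varepsilon)\subset U$. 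For each $j\in\{0,\dots,n\}$ choose $K_j\in\N$ such that $p_{i_j}(x_k-x)<\varepsilon$ for all $k\ge K_j$, and set $K=\max_{j} K_j$. Then for $k\ge K$ we have $x_k\in \cap_{j=0}^n B_{i_j}(x,\varepsilon)\subset U$, proving $x_k\to x$.

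The Cauchy statement is handled by exactly the same argument applied to differences. A sequence is Cauchy in a {\small TVS} if and only if, for every neighborhood $W$ of $0$, there is $K\in\N$ with $x_r-x_s\in W$ for all $r,s\ge K$. Taking $W$ to be a basis element $\cap_{j=0}^n B_{i_j}(0,\varepsilon)$ and using the equivalence $y\in B_i(0,\varepsilon)\Leftrightarrow p_i(y)<\varepsilon$, the forward direction applied with $W=B_i(0,\varepsilon)$ gives that $(x_k)$ is Cauchy with respect to each $p_i$, and the converse follows by choosing $K$ to be the maximum of the finitely many integers coming from the seminorms $p_{i_0},\dots,p_{i_n}$ appearing in the basis element.

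There is essentially no obstacle in this proof; the only point to handle carefully is that a basic neighborhood involves \emph{finitely many} seminorms simultaneously, but all with the same $\varepsilon$, so the standard device of taking the maximum of finitely many threshold indices $K_j$ suffices, and nothing more than the definitions and the preceding proposition is needed.
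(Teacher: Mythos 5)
Your proof is correct and follows essentially the same route as the paper's: test against the single-seminorm balls $B_i$ for the forward direction, and use a basic finite intersection $\cap_{j=0}^n B_{i_j}$ together with the maximum of finitely many thresholds for the converse. The only difference is cosmetic -- the paper proves only the Cauchy statement explicitly and leaves the convergence statement implicit, while you spell out both.
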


\begin{proof} We prove the second statement. Let $(x_k)_{k\in\N}$ be Cauchy with respect to the topology and take $i\in I$ and $\ze>0$. Since $B_i(0,\ze)$ is an open neighborhood of $0$, there is $N\ge 0$ such that $x_\ell-x_m\in B_i(0,\ze)$ if $\ell,m > N$. It follows that $p_i(x_\ell-x_m)<\ze$ if $\ell,m > N$. Conversely, let $U$ be an open neighborhood of $0$ and let $\cap_{j=0}^n B_{i_j}(0,\ze)\subset U$. The assumption implies that, for all $j\in\{0,\ldots,n\}$, there is $N_j\ge 0$ such that $p_{i_j}(x_\ell-x_m)<\ze$ if $\ell,m > N_j$. Hence, if $\ell,m > N=\op{sup}_jN_j$, the difference $x_\ell-x_m$ is in $U$.\end{proof}

Further:

\begin{prop} A seminorm topology is Hausdorff if and only if its inducing family of seminorms is separating.\end{prop}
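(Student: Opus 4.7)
The statement is a standard equivalence, and the proof is short and symmetric in spirit. The plan is to verify each implication directly from the definitions, using the basic neighborhoods $B_i(x,\ze)$ of the seminorm topology described in the preceding proposition.

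For the implication \emph{separating $\Rightarrow$ Hausdorff}, I would take two distinct points $x\neq y$ in $V$. Then $x-y\neq 0$, so by the separating hypothesis there exists $i\in I$ with $p_i(x-y)>0$. Setting $\ze:=p_i(x-y)/2$, I would consider the two basic open neighborhoods $B_i(x,\ze)$ and $B_i(y,\ze)$ and show they are disjoint: if some $z$ belonged to both, the triangle inequality for the seminorm $p_i$ would give
\[
p_i(x-y)\le p_i(x-z)+p_i(z-y)<2\ze=p_i(x-y)\;,
\]
a contradiction. Hence $V$ is Hausdorff.

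For the converse \emph{Hausdorff $\Rightarrow$ separating}, I would argue contrapositively. Suppose the family $(p_i)_{i\in I}$ is not separating, so there exists $x\neq 0$ with $p_i(x)=0$ for every $i\in I$. Then for any finite intersection $\cap_{j=0}^n B_{i_j}(0,\ze)$ of basic open neighborhoods of $0$, we have $p_{i_j}(x-0)=0<\ze$ for each $j$, so this intersection contains $x$. Since every open neighborhood of $0$ contains such a basic intersection, every open neighborhood of $0$ contains $x$. Consequently, any open neighborhood of $x$ and any open neighborhood of $0$ must intersect (the latter contains $x$), so $0$ and $x$ cannot be separated, contradicting the Hausdorff property.

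There is no real obstacle here; both directions reduce to combining the definition of the basic neighborhoods with the defining properties of a seminorm (the triangle inequality in one direction, and the vanishing $p_i(0)=0$ together with the separating condition in the other). The only point to be careful about is to work with the \emph{basis} of the topology given in the preceding proposition, rather than with arbitrary open sets, so that the verification of disjointness and of neighborhood containment stays elementary.
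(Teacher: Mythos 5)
Your proof is correct and follows essentially the same route as the paper: the separating $\Rightarrow$ Hausdorff direction is identical (disjoint balls of radius $p_i(x-y)/2$ via the triangle inequality), and your contrapositive argument for Hausdorff $\Rightarrow$ separating is just the logically equivalent rephrasing of the paper's direct argument that a basic neighborhood of $x$ missing $0$ forces $p_i(x)>0$. No gaps.
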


\begin{proof} Let $(p_i)_{i\in I}$ be the family of seminorms on the vector space $V$. Assume first that the seminorm topology is Hausdorff and let $x$ be a non-zero vector in $V$. In view of the Hausdorff property, there is a neighborhood of $x$ which does not contain $0$. Hence, there is an open subset $B_i(x,\ze)$ which does not contain $0$: $p_i(x)\ge \ze$ and $p_i(x)\neq 0$. Conversely, if the family of seminorms separates points and if $x,y$ are two different vectors in $V$, there exists $i\in I$ such that $p_i(x-y)\neq 0$, i.e., such that $p_i(x-y)=\zh>0$. Let now $\ze=\zh/2>0$ and take the open neighborhoods $B_i(x,\ze)$ of $x$ and $B_i(y,\ze)$ of $y$. If these neighborhoods have a common vector $z$, then $$2\ze=p_i(x-y)\le p_i(x-z)+p_i(z-y)<2\ze\;,$$ so that the neighborhoods are disjoint.  \end{proof}

As well-known, the next definition of Fr\'echet spaces is equivalent to the above one, but is better suited for applications.

\begin{defi}\label{Frechet2} A {\small TVS} is a \emph{Fr\'echet space} if it is Hausdorff and (sequentially) complete, and if its topology can be induced by a countable family of seminorms.\end{defi}

We are now prepared to discuss standard construction methods of Fr\'echet spaces from countable families of seminorms.\medskip

\begin{prop}\label{Const} If $(p_n)_{n\in\N}$ is a countable family of seminorms on a vector space $V$, and if this family separates points, then \be\label{TraInvComMet}d(x,y):=\sum_{n\in\N}\frac{1}{2^n}\frac{p_n(x-y)}{1+p_n(x-y)},\quad x,y\in V\;,\ee is a translation-invariant metric on $V$ that induces the seminorm topology of $V$.
\end{prop}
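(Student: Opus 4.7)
\smallskip

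The plan is to verify the four metric axioms together with translation invariance, and then to show that the $d$-topology and the seminorm topology have mutually refining bases of neighbourhoods of $0$ (by translation invariance it is enough to compare neighbourhoods of $0$). Convergence of the series defining $d$ is not an issue: since $t\mapsto t/(1+t)$ maps $[0,\infty)$ into $[0,1)$, each term is bounded by $2^{-n}$, so the sum is absolutely convergent and bounded by $2$. Non\-negativity and symmetry are immediate from $p_n(x-y)=p_n(y-x)\geq 0$, and translation invariance $d(x+v,y+v)=d(x,y)$ follows at once because each $p_n$ is itself translation invariant.

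The identity $d(x,y)=0 \Leftrightarrow x=y$ is the point where the separating hypothesis is used: if $d(x,y)=0$ then every term vanishes, hence $p_n(x-y)=0$ for all $n$, and separability forces $x=y$. For the triangle inequality, the key elementary lemma is that the function $\varphi(t)=t/(1+t)$ is nondecreasing on $[0,\infty)$ and subadditive, i.e.\ $\varphi(a+b)\leq\varphi(a)+\varphi(b)$ for $a,b\geq 0$; this follows by clearing denominators. Combined with the seminorm triangle inequality $p_n(x-z)\leq p_n(x-y)+p_n(y-z)$, monotonicity of $\varphi$ gives $\varphi(p_n(x-z))\leq\varphi(p_n(x-y)+p_n(y-z))$, and subadditivity then yields $\varphi(p_n(x-z))\leq\varphi(p_n(x-y))+\varphi(p_n(y-z))$. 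Multiplying by $2^{-n}$ and summing produces $d(x,z)\leq d(x,y)+d(y,z)$.

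The main obstacle, and the part that deserves care, is showing that the metric topology coincides with the seminorm topology. For one direction, fix a basic seminorm neighbourhood $W=\bigcap_{j=0}^{k}B_{n_j}(0,\varepsilon)$ of $0$. If $d(0,y)<r$, then for each $j$ the individual term satisfies $2^{-n_j}\varphi(p_{n_j}(y))\leq d(0,y)<r$, hence $p_{n_j}(y)<\frac{2^{n_j}r}{1-2^{n_j}r}$ provided $2^{n_j}r<1$. Choosing $r>0$ small enough that this last quantity is less than $\varepsilon$ for every $j=0,\dots,k$ gives a $d$-ball about $0$ contained in $W$. For the converse direction, given a $d$-ball $B_d(0,r)$, first select $N$ large enough so that the tail $\sum_{n>N}2^{-n}$ is less than $r/2$, and then pick $\varepsilon>0$ small enough that $\sum_{n=0}^{N}2^{-n}\varphi(\varepsilon)<r/2$; since $\varphi$ is monotone, the basic neighbourhood $\bigcap_{n=0}^{N}B_n(0,\varepsilon)$ is then contained in $B_d(0,r)$. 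These two containments exhibit mutually refining bases at $0$, and translation invariance of both topologies promotes this to the equality of topologies on all of $V$. The only subtle point is ensuring the constants are chosen in the correct order (first $N$ depending on $r$, then $\varepsilon$ depending on $N$ and $r$); everything else is bookkeeping.
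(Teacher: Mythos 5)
Your proof is correct. Note that the paper itself offers no argument here --- it simply declares the statement ``a standard functional analytical result'' --- so there is nothing to contrast your approach with; what you have written is precisely the standard textbook proof (subadditivity and monotonicity of $t\mapsto t/(1+t)$ for the triangle inequality, the separating hypothesis for definiteness, and the two-sided comparison of neighbourhood bases at $0$, promoted to all of $V$ by translation invariance), carried out with the quantifiers in the right order.
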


\begin{proof} The statement is a standard functional analytical result.\end{proof}

The next proposition is natural. It extends Proposition \ref{Equiv}:

\begin{prop}\label{Compl} In the situation of Proposition \ref{Const}, a sequence in $V$ converges to a limit in $V$ (resp., is a Cauchy sequence) with respect to the metric $d$, if and only if it converges (resp., is a Cauchy sequence) with respect to the seminorm topology, and if and only if it converges (resp., is a Cauchy sequence) with respect to all seminorms.\end{prop}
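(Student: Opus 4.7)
The plan is to reduce the three-way equivalence to a single nontrivial direction using results already in hand. Proposition \ref{Equiv} states that a sequence converges (resp., is Cauchy) with respect to the seminorm topology if and only if it converges (resp., is Cauchy) with respect to each individual seminorm $p_n$. So it remains only to show that convergence (resp., being Cauchy) with respect to the metric $d$ is equivalent to convergence (resp., being Cauchy) with respect to the seminorm topology.

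For convergence this is immediate: Proposition \ref{Const} asserts that $d$ induces the seminorm topology, and convergence of sequences is a purely topological notion, so $x_k \to x$ in $(V,d)$ if and only if $x_k\to x$ in the seminorm topology.

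The substantive step is the Cauchy equivalence, which is not automatic from metrizing the topology. I would argue directly from the explicit formula \eqref{TraInvComMet}. For the ``$d$-Cauchy implies each $p_n$-Cauchy'' direction, I would use the trivial lower bound
$$\frac{1}{2^n}\,\frac{p_n(x-y)}{1+p_n(x-y)}\le d(x,y)$$
together with the fact that $t\mapsto t/(1+t)$ is strictly increasing on $[0,\infty)$, so that control of $d(x_\ell,x_m)$ forces control of $p_n(x_\ell-x_m)$ for each fixed $n$. For the converse, given $\ze>0$, I would first choose $N\in\N$ large enough that the tail $\sum_{n>N}2^{-n}$ is at most $\ze/2$ (this bounds the tail contribution to $d$ independently of the sequence, since each summand is $<1$), and then use the finitely many hypotheses that $(x_k)$ is $p_n$-Cauchy for $n=0,\ldots,N$ to make the first $N+1$ summands jointly less than $\ze/2$ from some index onward.

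I expect no real obstacle: the only care needed is the standard $\ze/2$-splitting of the series defining $d$, plus the elementary monotonicity of $t/(1+t)$. Once these two inclusions are established, combining with Proposition \ref{Equiv} yields the three-way equivalence, simultaneously for convergence and for the Cauchy property.
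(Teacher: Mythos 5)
Your proposal is correct and complete. Note that the paper itself offers no proof of Proposition \ref{Compl} (it is presented as a ``natural'' extension of Proposition \ref{Equiv}), so there is nothing to compare against line by line; your argument supplies the standard justification. The reduction is sound: the convergence statement is purely topological and follows from Proposition \ref{Const}, the seminorm-topology/all-seminorms equivalence is Proposition \ref{Equiv}, and you correctly identify the Cauchy equivalence between $d$ and the seminorms as the only step needing work, since Cauchyness is not a topological invariant. Both directions of that step are handled properly: the termwise lower bound on $d$ plus the monotonicity (and invertibility on $[0,1)$) of $t\mapsto t/(1+t)$ gives one inclusion, and the $\ze/2$ tail splitting with the bound $t/(1+t)\le\min(t,1)$ gives the other. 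One remark: a slightly slicker route to the Cauchy equivalence is to observe that $d$ is translation-invariant, so $d(x_\ell,x_m)=d(x_\ell-x_m,0)$ and the $d$-balls about $0$ form a neighborhood base of $0$ for the seminorm topology; hence $d$-Cauchy coincides with the topological-vector-space notion of Cauchy used in the proof of Proposition \ref{Equiv}, with no series estimates needed. Your explicit computation buys the same conclusion at the cost of a few lines, and has the advantage of not presupposing the TVS formulation of the Cauchy condition.
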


\begin{rem}\label{Method} To construct a Fr\'echet space, one usually proceeds as follows, although the method admits a number of variants. On starts with a vector space $V$ and a countable and separating family $(p_n)_{n\in\N}$ of seminorms on it. The seminorm topology turns $V$ into a Hausdorff {\small LCTVS}. Definition \ref{Frechet2} then allows us to conclude that $V$ is a Fr\'echet space, if we can prove that $V$ is (sequentially) complete with respect to its topology. In view of Proposition \ref{Compl}, this condition is equivalent to (sequential) completeness with respect to the translation-invariant metric \eqref{TraInvComMet} induced by the seminorms. Further, if one can verify that a sequence in $V$ that is Cauchy for any seminorm $p_n$, converges to a fixed $x\in V$ for any $p_n$, then the {\small TVS} $V$ is (sequentially) complete with respect to the seminorm topology, again due to Proposition \ref{Compl}.\end{rem}

\subsubsection{Examples}

We briefly present some examples.

\begin{ex}The vector space $\R^\infty$ of all sequences $r=(r_0,r_1,\ldots)$ of real numbers is a Fr\'echet space for the countable family $(p_n)_{n\in\N}$ of seminorms $$p_n(r)=\sup_{m\le n}|r_m|\;.$$\end{ex}

The family $(q_n)_{n\in\N}$ given by $$q_n(r)=\sum_{r\le n}|r_m|\;$$ defines the same topology, i.e., the same Fr\'echet space. We say that two such families of seminorms are {\em equivalent}.\medskip

We recall an important criterion for equivalence of two families of seminorms.

\begin{prop}\label{Criterion} To families of seminorms $(p_i)_{i\in I}$ and $(q_j)_{j\in J}$ on a vector space $V$ are equivalent, i.e., they induce the same locally convex topology, if and only if, for any $i$, there is a constant $C>0$ and a finite subset $\{j_1,\ldots,j_N\}\subset J$, such that $$p_i(x)\le C(q_{j_1}(x)+\ldots+q_{j_N}(x)),\;\forall x\in V\;,$$ and vice versa.\end{prop}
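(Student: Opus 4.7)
The plan is to prove both implications by exploiting the equivalence between continuity of a seminorm and its domination by a finite combination of generating seminorms, translating topological containment of basic neighborhoods into the stated numerical inequalities.

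First I would address the ``only if'' direction. Assuming the two seminorm topologies coincide, fix $i\in I$ and consider the basic $p$-open set $B_i(0,1)=\{x\in V:p_i(x)<1\}$. Since it is open in the common topology, and since the sets $\bigcap_{k=1}^N B_{j_k}(0,\zd)$ form a basis for the $q$-topology at $0$, there exist $\zd>0$ and $j_1,\ldots,j_N\in J$ with $\bigcap_{k=1}^N B_{j_k}(0,\zd)\subset B_i(0,1)$. I would then run the standard homogeneity/scaling argument: for $x\in V$ with $q_{j_1}(x)+\ldots+q_{j_N}(x)>0$, set $t=\zd/(q_{j_1}(x)+\ldots+q_{j_N}(x)+\eta)$ for a small $\eta>0$; then $tx$ lies in each $B_{j_k}(0,\zd)$ by subadditivity and positivity, hence $p_i(tx)<1$, which by positive homogeneity yields $p_i(x)\le \zd^{-1}(q_{j_1}(x)+\ldots+q_{j_N}(x)+\eta)$. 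Letting $\eta\to 0^+$ gives $p_i(x)\le C(q_{j_1}(x)+\ldots+q_{j_N}(x))$ with $C=\zd^{-1}$. For $x$ with all $q_{j_k}(x)=0$, the same scaling applied with arbitrarily large $t$ forces $p_i(x)=0$, so the inequality holds trivially. By symmetry, one also obtains the analogous domination of each $q_j$ by a finite sum of $p_i$'s.

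For the ``if'' direction, I would show that the domination inequalities force each basic $p$-neighborhood to be $q$-open, and vice versa. Fix $i$, $\ze>0$, and $x_0\in V$. By hypothesis there are $C>0$ and $j_1,\ldots,j_N$ with $p_i(y)\le C\sum_{k=1}^N q_{j_k}(y)$ for every $y\in V$. If $x\in \bigcap_{k=1}^N B_{j_k}(x_0,\ze/(NC))$, then $q_{j_k}(x-x_0)<\ze/(NC)$ for all $k$, whence $p_i(x-x_0)<\ze$ by subadditivity. Thus the $q$-open set $\bigcap_{k=1}^N B_{j_k}(x_0,\ze/(NC))$ is contained in $B_i(x_0,\ze)$, showing that $B_i(x_0,\ze)$ is $q$-open. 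Since the sets $B_i(x_0,\ze)$ generate the $p$-topology, the $p$-topology is coarser than the $q$-topology; the symmetric hypothesis yields the reverse inclusion, and the two topologies agree.

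There is no real obstacle here: both directions reduce to one standard trick, namely the translation between ``$B_i(0,1)$ contains a basic $q$-neighborhood of $0$'' and ``$p_i$ is dominated by a finite sum of $q_{j_k}$''. The only point requiring mild care is the scaling step in the first direction when $\sum_k q_{j_k}(x)=0$, which one handles either by the $\eta$-regularization above or by observing directly that $tx\in\bigcap_k B_{j_k}(0,\zd)$ for all $t>0$, forcing $p_i(x)=0$ and making the desired inequality vacuous.
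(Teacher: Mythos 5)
Your proof is correct. The paper itself states Proposition \ref{Criterion} without proof (it is merely ``recalled'' as a standard criterion), so there is nothing to compare against; your argument is the standard one, and both directions are handled completely, including the degenerate case $\sum_k q_{j_k}(x)=0$ in the scaling step. One cosmetic remark: in the ``if'' direction the estimate $p_i(x-x_0)\le C\sum_k q_{j_k}(x-x_0)<\ze$ follows from the domination hypothesis rather than from subadditivity, but the computation you perform is the right one.
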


The next observation is indispensable.

\begin{lem}\label{CompCov} Any open subset of a (second-countable Hausdorff finite-dimensional) smooth manifold admits a countable cover by compact subsets.\end{lem}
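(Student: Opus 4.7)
The plan is to reduce the claim to the standard point-set fact that a locally compact, Hausdorff, second-countable space is $\sigma$-compact, and apply it to $U$ itself.

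First I would observe that $U$, as an open subspace of $M$, inherits the three relevant properties: it is Hausdorff (subspaces of Hausdorff spaces are Hausdorff), second-countable (restrict a countable basis of $M$ to $U$), and locally compact (since $M$ is a finite-dimensional smooth manifold, every point has a chart neighborhood homeomorphic to an open subset of $\R^p$, so $M$ is locally compact; local compactness passes to open subspaces of Hausdorff spaces).

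Next I would construct the desired countable cover. Using local compactness of $U$, for each point $x\in U$ I can choose an open neighborhood $W_x\subset U$ whose closure (in $U$, equivalently in $M$) is compact and contained in $U$; concretely, pick a chart around $x$ mapping into $\R^p$ and take the preimage of a small open Euclidean ball whose closed ball still lies in the chart domain and in $U$. The collection $\{W_x\}_{x\in U}$ is an open cover of $U$. Since $U$ is second-countable, its topology is Lindel\"of, so there is a countable subcover $\{W_{x_n}\}_{n\in\N}$. Setting $C_n:=\overline{W_{x_n}}$ gives a countable family of compact subsets of $U$ with
\[
U=\bigcup_{n\in\N}W_{x_n}\subset\bigcup_{n\in\N}C_n\subset U,
\]
so $(C_n)_{n\in\N}$ is the required countable cover of $U$ by compact subsets.

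There is no real obstacle here: the argument is entirely standard point-set topology, and the only thing to keep track of is that the smooth manifold hypotheses (finite dimension, Hausdorff, second-countable) deliver precisely local compactness and the Lindel\"of property needed to extract a countable subcover. If one wants the stronger exhaustion $C_n\subset\mathrm{int}(C_{n+1})$ used elsewhere in the paper, one can refine the construction inductively by absorbing each $C_n$ into a larger relatively compact open set, but this is not required by the statement as given.
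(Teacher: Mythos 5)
Your proof is correct and follows essentially the same route as the paper: both arguments rest on the Lindel\"of property coming from second-countability and on extracting relatively compact neighborhoods from charts, then taking closures of a countable subcover. The only (cosmetic) difference is that the paper first proves the statement for open subsets of $\R^p$ and then transports it through countably many charts, whereas you run the Lindel\"of argument once, directly on the cover of $U$ by relatively compact chart neighborhoods.
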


\begin{proof} Every second-countable topological space is a Lindel\"of space, i.e., any open cover admits a countable subcover. Since all smooth manifolds considered in our texts on $\Z_2^n$-Geometry are second-countable Hausdorff finite-dimensional smooth manifolds, the Lindel\"of property holds for all open subsets $U\subset M$.\medskip

Any open subset $\zW\subset\R^p$ admits a cover $\cup_{x\in \zW} B(x)$ by open balls $B(x)$ whose adherence $\bar B(x)$ is contained in $\zW$. In view of the Lindel\"of property, we can extract from the preceding open cover of $\zW$ a countable subcover \be\label{CCCS}\zW=\cup_{i\in\N}B(x_i)=\cup_{i\in\N}\bar B(x_i)\;.\ee The latter cover is searched countable cover by compact subsets.\medskip

Let now $U\subset M$ be an open subset of a $p$-dimensional smooth manifold. We can cover $U$ by coordinate systems $(U_\za,\zvf_\za)$ and extract a countable subcover $(U_i,\zvf_i)$. Since $\zvf_i$ is a homeomorphism between $U_i\subset M$ and $\zvf_i(U_i)\subset\R^p$, the set $U_i$ admits a countable cover $U_i=\cup_{k\in\N}C_{ki}$ by compact subsets $C_{ki}\subset U_i$. We thus get the countable cover $U=\cup_{i\in\N}\cup_{k\in\N}C_{ki}$ of $U$ by compact subsets $C_{ki}\subset U$.\end{proof}

The following is one of the important examples of Fr\'echet spaces.

\begin{ex} For any open subset $\zW\subset \R^p$, the function algebra $\Ci(\zW)$ is a Fr\'echet vector space for the countable family $(p_{\za,i})_{\za,i}$ of seminorms defined, for any multi-index $\za\in\N^{\times p}$ and any compact $C_i$ ($i\in\N$) of a countable cover of $\zW$ by compact subsets (e.g., $C_i$ may run through the balls $\bar B(x_i)$ of (\ref{CCCS})), by \be\label{SNCi} p_{\za,i}(f)=\sup_{x\in C_i}|\p_{x}^\za f|\;.\ee\end{ex}

\noindent To prove this standard statement one uses Remark \ref{Method}. The result can be extended:

\begin{ex}\label{FrechetSmoothFctBase} The function algebra $\Ci(U)$ of an open subset $U$ of a (second-countable Hausdorff finite-dimensional) smooth manifold $M$, is a Fr\'echet vector space. The locally convex topology of $\Ci(U)$ is implemented (for instance) by the family of seminorms \be\label{SNCiU}p_{\zD,C}(f)=\sup_{x\in C}|\zD(f)(x)|\;,\ee where $\zD\in\cD(U)$ is any differential operator acting on $\Ci(U)$ and where $C$ is any compact subset of $U$. Note that -- by definition -- this topology is the topology of uniform convergence on compact subsets $C$ of $f$ and its `derivatives' $\zD(f)$.\end{ex}

\subsection{Nuclear spaces}

\subsubsection{Definition}\label{NuclearDefinition}

Let us recall that a {\it completion} of a {\small TVS} $V$ is a complete {\small TVS} $\hat V$ that contains $V$ (or, better, a homeomorphic image of $V$) as a dense subspace. Any {\small (LC)TVS} can be completed as {\small (LC)TVS}.\medskip

When considering (algebraic) tensor products of {\small LCTVS}-s, some subtleties arise due to the possibility to choose various topologies on these products.\medskip

More precisely, let $V,W$ be two {\small LCTVS}-s. The finest locally convex topology on the algebraic tensor product $V\otimes W$, for which the natural map $V\times W\to V\otimes W$ is continuous, is referred to as the \emph{projective tensor topology}. The completion of the resulting {\small LCTVS} is the completed projective tensor product $V\widehat\otimes_\zp W$. There exists another natural locally convex topology on $V\otimes W$, which is coarser than the projective one, and which is called the injective tensor topology. The corresponding completion is the completed injective tensor product $V\widehat\otimes_i W$. Any reasonable locally convex topology on $V\otimes W$ lies between the injective and projective ones.\medskip

We are now prepared to give one of the equivalent definitions of nuclear {\small LCTVS}-s.

\begin{defi} A {\small LCTVS} $V$ is \emph{nuclear} if, for any {\small LCTVS} $W$, the canonical map $V\widehat\otimes_\zp W\to V\widehat\otimes_i W$ is an isomorphism of {\small LCTVS}-s.\end{defi}

More precisely, the identity $\id: V\0_\zp W\to V\0_i W$ is a bijective continuous linear map, and its continuous extension $\widehat{\id}:V\widehat{\0}_\zp W\to V\widehat{\0}_i W$ is an injective continuous linear map. When $V$ is nuclear, this canonical map is onto, or, better, it is a {\small TVS}-isomorphism. As already said, any (reasonable) locally convex topology on $V\0 W$ is located between the projective and the injective tensor topologies. Hence, if $V$ is nuclear, the complete {\small TVS} $V\widehat{\0}W$ is independent of the locally convex topology considered.\medskip

\emph{Nuclear Fr\'echet spaces} are just a specific type of nuclear {\small LCTVS}-s. Fr\'echet spaces are a full subcategory of {\small TVS}-s, and so are nuclear spaces.

\subsubsection{Example}\label{ExNuc}

When thinking about the duality between spaces and function algebras, one meets the problem of interpreting a tensor product of function algebras as function algebra of some space. Even in the case of algebras $\Ci(\zW)$ of smooth functions on open subsets $\zW$ of Euclidean spaces, the canonical map $\Ci(\zW')\0\Ci(\zW'')\to\Ci(\zW'\times\zW'')$ is (of course) not an isomorphism. However, if one endows the algebraic tensor product of the {\small LCTVS}-s $\Ci(\zW')$ and $\Ci(\zW'')$ with the projective tensor topology and considers the corresponding completion, one gets an isomorphism of {\small TVS}-s: \be\label{CompProdSmooth}\Ci(\zW')\widehat\0_\zp\Ci(\zW'')\simeq\Ci(\zW'\times\zW'')\;.\ee The topology that we choose on the algebraic tensor product is actually irrelevant -- a space of the type $\Ci(\zW)$ is nuclear. More precisely, both, $\Ci(\zW)$ ($\zW\subset\R^p$) and $\Ci(U)$ ($U\subset M$, $M$ smooth manifold), are nuclear Fr\'echet spaces.\medskip

In more detail, if $V,W$ are complete {\small LCTVS}-s and if $V$ is a concrete space (e.g., $V=\Ci(\zW')$), then it is mostly impossible to characterize both $V\widehat\0_\zp W$ and $V\widehat\0_i W$ concretely (in fact a space of bilinear forms on dual spaces of $V$ and $W$ is also involved here, but we refrain from describing this space precisely). For example, when $V=\Ci(\zW')$ and $W=\Ci(\zW'')$, we can interpret $\Ci(\zW')\widehat\0_i\Ci(\zW'')$ concretely as the space $\Ci(\zW'\times \zW'')$, but we have no good access to $\Ci(\zW')\widehat\0_\zp\Ci(\zW'')$. If we know a priori that $V=\Ci(\zW')$ is nuclear, the problem disappears.\medskip

\subsection{Fr\'echet algebras}

In fact, the algebra $\Ci(U)$, where $U$ is any open subset of any smooth manifold, is a Fr\'echet algebra \cite{TopAlg}. The definition of a Fr\'echet algebra is natural:

\begin{defi} A \emph{Fr\'echet algebra} is a Fr\'echet vector space $A$, which is equipped with an associative bilinear and (jointly) continuous multiplication $\cdot:A\times A\to A$. If $(p_i)_{i\in I}$ is a family of seminorms that induces the topology on $A$, (joint) continuity is equivalent to the existence, for any $i\in I$, of $j\in I$, $k\in I$, and $C>0$, such that $$p_i(x\cdot y)\le C\, p_j(x)\, p_k(y),\;\forall x,y\in A.$$ We can always consider an equivalent increasing countable family of seminorms $(||-||_n)_{n\in\N}$. The preceding condition then requires that, for any $n\in\N$, there is $r\in\N, r\ge n$ and $C>0$, such that $$||x\cdot y||_n\le C\, ||x||_r\, ||y||_r,\;\forall x,y\in A.$$ In particular, the topology can be induced by a countable family of submultiplicative seminorms, i.e., by a family  $(p_n)_{n\in\N}$ that satisfies $$p_n(x\cdot y)\le p_n(x)\,p_n(y),\;\forall n\in\N, \forall x,y\in A.$$ \end{defi}

Note that many authors define a Fr\'echet algebra simply as a Fr\'echet vector space, which carries an associative bilinear multiplication, and whose topology can be induced by a countable family $(q_n)_{n\in\N}$ of submultiplicative seminorms. This latter definition is equivalent to the former.

\subsection{Fr\'echet sheaves}

Let $M$ be a smooth manifold and denote by ${\tt Open}(M)$ (resp., ${\tt Alg}(\R)$) the category of open subsets of $M$ (resp., of associative $\R$-algebras). As mentioned above, the function sheaf $$\Ci:{\tt Open}(M)^{\op{op}}\ni U\mapsto \Ci(U)\in{\tt Alg(\R)}$$ is actually valued in \emph{nuclear Fr\'echet algebras}, i.e., in nuclear Fr\'echet vector spaces that carry a Fr\'echet algebra structure. In view of this observation, it is natural to consider Fr\'echet sheaves. Their definition is well-known:

\begin{defi}\label{FSh} A sheaf $\cF$ of (real) vector spaces over a smooth manifold $M$ is a \emph{Fr\'echet sheaf of vector spaces}, if the next two conditions are satisfied: \begin{itemize}\item for any $U\in{\tt Open}(M)$, the vector space $\cF(U)$ is a Fr\'echet vector space, and \item for any $U\in{\tt Open}(M)$ and any cover $(U_i)_{i\in I}$ of $U$, $U_i\in{\tt Open}(U)$, the locally convex topology on $\cF(U)$ is the coarsest topology for which the restriction maps $\cF(U)\to\cF(U_i)$ are continuous.\end{itemize}\end{defi}

Since $\cF$ is a sheaf of (real) vector spaces, it follows from the second condition that, for any $V\in{\tt Open}(U)$, the restriction map $\cF(U)\to\cF(V)$ is $\R$-linear and continuous. As Fr\'echet spaces are a full subcategory of {\small TVS}-s, the second requirement of Definition \ref{FSh} is thus quite natural. In view of this understanding, it is clear that the definition of a \emph{(nuclear) Fr\'echet sheaf of algebras} is similar, but starts from a sheaf of (real) algebras.

\end{document}